\keywords{quantum programming, data structures, quantum random-access memory, reversible programming, history independence}
\lstdefinelanguage{tower}
{morekeywords={
     fun,let,bool,unit,uint,ptr,type,if,else,return,with,do,not,test,null,alloc,default,static,land,lor,lnot,lsl,lsr
   },
 sensitive=true,
 morecomment=[n]{/*}{*/},
 morestring=[b]",
 escapechar=\%,
 columns=fullflexible,
 keepspaces=true,
 basicstyle=\ttfamily,
 mathescape=true,
}
\lst@ifdisplaystyle\color{blue!50!black}\bfseries\fi\ttfamily,
\lstdefinestyle{color}{
    keywordstyle=\color{blue!50!black}\bfseries\ttfamily,
}
\lstdefinestyle{nonum}{
    numbers=none,
}
\lstdefinestyle{tiny}{
    basicstyle=\tiny\ttfamily,
}
    \edef\@temp{\noexpand\label{\lst@label-\arabic{lstnumber}}}%
\newcommand*{\subfilesbibliography}[1]{%
 \expandafter\ifx\csname ver@subfiles.cls\endcsname\relax
   \expandafter\@secondoftwo
 \else
   \expandafter\@firstoftwo
  \fi
  {\bibliography{#1}}
  {}%
}
\newcommand{\LangName}{Tower}
\newcommand{\CoreLang}{Core \LangName{}}
\newcommand{\AllocatorName}{Boson}
\newcommand{\LibraryName}{Ground}
\title{\LangName{}: Data Structures in Quantum Superposition}
\author{Charles Yuan}
\affiliation{
  \institution{MIT CSAIL}
  \streetaddress{32 Vassar St}
  \city{Cambridge}
  \state{MA}
  \postcode{02139}
  \country{USA}
}
\email{chenhuiy@csail.mit.edu}
\author{Michael Carbin}
\affiliation{
  \institution{MIT CSAIL}
  \streetaddress{32 Vassar St}
  \city{Cambridge}
  \state{MA}
  \postcode{02139}
  \country{USA}
}
\email{mcarbin@csail.mit.edu}
\DeclarePairedDelimiter\abs{\lvert}{\rvert}%
\newcommand{\ctx}{\ensuremath{\Gamma}}
\newcommand{\tctx}{\ensuremath{\Delta}}
\newcommand{\type}{\ensuremath{\tau}}
\newcommand{\tUnit}{\ensuremath{()}}
\newcommand{\tUInt}{\ensuremath{\texttt{uint}}}
\newcommand{\tBool}{\ensuremath{\texttt{bool}}}
\newcommand{\tPair}[2]{\ensuremath{({#1}, {#2})}}
\newcommand{\tPtr}[1]{\ensuremath{\texttt{ptr}({#1})}}
\newcommand{\tInd}[2]{\ensuremath{\nu {#1}. {#2}}}
\newcommand{\vValue}{\ensuremath{v}}
\newcommand{\vUnit}{\ensuremath{()}}
\newcommand{\vNum}[1]{\ensuremath{\overline{#1}}}
\newcommand{\vTrue}{\ensuremath{\texttt{true}}}
\newcommand{\vFalse}{\ensuremath{\texttt{false}}}
\newcommand{\vNull}[1]{\ensuremath{\texttt{null}_{#1}}}
\newcommand{\vPtr}[2]{\ensuremath{\texttt{ptr}_{#1}[{#2}]}}
\newcommand{\eExp}{\ensuremath{e}}
\newcommand{\ePair}[2]{\ensuremath{({#1},{#2})}}
\newcommand{\eUnop}[2]{\ensuremath{{#1}\ {#2}}}
\newcommand{\eBinop}[3]{\ensuremath{{#1}\ {#2}\ {#3}}}
\newcommand{\eProj}[2]{\ensuremath{\pi_{#1}({#2})}}
\newcommand{\eAlloc}[1]{\ensuremath{\texttt{alloc}({#1})}}
\newcommand{\eCall}[3]{\ensuremath{{#1}^{#2}({#3})}}
\newcommand{\dDecl}{\ensuremath{d}}
\newcommand{\dFun}[5]{\ensuremath{\texttt{fun}\ {#1}^{#2}\ ({#3}) \to {#4}\ \texttt{is}\ {#5}}}
\newcommand{\oNot}{\ensuremath{\texttt{not}}}
\newcommand{\oTest}{\ensuremath{\texttt{test}}}
\newcommand{\oAdd}{\ensuremath{\texttt{+}}}
\newcommand{\oMul}{\ensuremath{\texttt{*}}}
\newcommand{\oSub}{\ensuremath{\texttt{-}}}
\newcommand{\oAnd}{\ensuremath{\texttt{\&\&}}}
\newcommand{\oOr}{\ensuremath{\texttt{||}}}
\newcommand{\sStmt}{\ensuremath{s}}
\newcommand{\sSkip}{\ensuremath{\texttt{skip}}}
\newcommand{\sSeq}[2]{\ensuremath{{#1};\,{#2}}}
\newcommand{\sBind}[2]{\ensuremath{{#1} \leftarrow {#2}}}
\newcommand{\sUnbind}[2]{\ensuremath{{#1} \rightarrow {#2}}}
\newcommand{\sSwap}[2]{\ensuremath{{#1} \Leftrightarrow {#2}}}
\newcommand{\sMemSwap}[2]{\ensuremath{*{#1} \Leftrightarrow {#2}}}
\newcommand{\sIf}[2]{\ensuremath{\texttt{if}\ {#1}\ \texttt{then}\ {#2}\ \texttt{end}}}
\newcommand{\sReturn}[1]{\ensuremath{\texttt{return}\ {#1}}}
\newcommand{\sWith}[2]{\ensuremath{\texttt{with}\ {#1}\ \texttt{do}\ {#2}\ \texttt{end}}}
\newcommand{\bBound}{\ensuremath{\beta}}
\newcommand{\bVar}{\ensuremath{b}}
\newcommand{\pProgram}{\ensuremath{p}}
\newcommand{\typeok}[2]{\ensuremath{{#1} \vdash {#2}\ \textsf{ok}}}
\newcommand{\typeequiv}[2]{\ensuremath{{#1} = {#2}}}
\newcommand{\hastype}[3]{\ensuremath{{#1} \vdash {#2} : {#3}}}
\newcommand{\stmtok}[3]{\ensuremath{{#1} \vdash {#2} \dashv {#3}}}
\newcommand{\modified}[1]{\ensuremath{\textsf{mod}({#1})}}
\newcommand{\exposed}[1]{\ensuremath{\textsf{ex}({#1})}}
\newcommand{\fctx}{\ensuremath{\Phi}}
\newcommand{\hastypef}[5]{\ensuremath{{#1}; {#2} \vdash_{#3} {#4} : {#5}}}
\newcommand{\stmtokf}[5]{\ensuremath{{#1}; {#2} \vdash_{#3} {#4} \dashv {#5}}}
\newcommand{\get}[2]{\ensuremath{{#1}[{#2}]}}
\newcommand{\subst}[3]{\ensuremath{{#1}[{#2} \leftarrow {#3}]}}
\newcommand{\default}[1]{\ensuremath{\textsf{z}({#1})}}
\newcommand{\mstate}[1]{\ensuremath{\langle{#1}\rangle}}
\newcommand{\reg}{\ensuremath{R}}
\newcommand{\mem}{\ensuremath{M}}
\newcommand{\estep}[5]{\ensuremath{\mstate{{#1}, \subst{#2}{#3}{#4}} \Downarrow_{#3} \mstate{{#1}, \subst{#2}{#3}{#5}}}}
\newcommand{\estepline}[5]{\mstate{{#1}, \subst{#2}{#3}{#4}} \Downarrow_{#3} \\\\ \mstate{{#1}, \subst{#2}{#3}{#5}}}
\newcommand{\sstep}[6]{\ensuremath{\mstate{{#1}, {#2}, {#3}} \mapsto \mstate{{#4}, {#5}, {#6}}}}
\newcommand{\sstepline}[6]{\mstate{{#1}, {#2}, {#3}} \mapsto \\\\ \mstate{{#4}, {#5}, {#6}}}
\newcommand{\sstepstar}[6]{\ensuremath{\mstate{{#1}, {#2}, {#3}} \mapsto^* \mstate{{#4}, {#5}, {#6}}}}
\newcommand{\restep}[5]{\ensuremath{\mstate{{#1}, \subst{#2}{#3}{#5}} \Uparrow_{#3} \mstate{{#1}, \subst{#2}{#3}{#4}}}}
\newcommand{\rsstep}[6]{\ensuremath{\mstate{{#1}, {#5}, {#6}} \mapsto_R \mstate{{#4}, {#2}, {#3}}}}
\newcommand{\rsstepstar}[6]{\ensuremath{\mstate{{#1}, {#5}, {#6}} \mapsto_R^* \mstate{{#4}, {#2}, {#3}}}}
\newcommand{\sstepstuck}[3]{\ensuremath{\mstate{{#1}, {#2}, {#3}}\ \mathsf{err}}}
\newcommand{\rsstepstuck}[3]{\ensuremath{\mstate{{#1}, {#2}, {#3}}\ \mathsf{err}_R}}
\newcommand{\rsstepline}[6]{\mstate{{#1}, {#5}, {#6}} \mapsto_R \\\\ \mstate{{#4}, {#2}, {#3}}}
\newcommand{\regtypes}[1]{\ensuremath{\ctx_{#1}}}
\newcommand{\denote}[1]{\ensuremath{\left\llbracket{#1}\right\rrbracket}}
\newcommand{\reverse}[1]{\mathcal{I}[{#1}]}
\newcommand{\circuit}[1]{\mathcal{C}\denote{#1}}
\newcommand{\oset}[3][0ex]{%
  \mathrel{\mathop{#3}\limits^{
    \vbox to#1{\kern-0.75\ex@
    \hbox{$\scriptstyle#2$}\vss}}}}
\newcommand{\qstate}{\ensuremath{\psi}}
\newcommand{\pstate}[1]{\smash{\colorbox{gray!20}{\ensuremath{#1}\vphantom{S}}}\makebox(0,9){}}
\newcommand{\goesto}{\ensuremath{\hookrightarrow}}
\newcommand{\Comment}[1]{\ignorespaces}
\begin{document}
\def\biblio{}

\begin{abstract}
Emerging quantum algorithms for problems such as element distinctness, subset sum, and closest pair demonstrate computational advantages by relying on \emph{abstract data structures}. Practically realizing such an algorithm as a program for a quantum computer requires an efficient implementation of the data structure whose operations correspond to \emph{unitary operators} that manipulate quantum \emph{superpositions} of data.

To correctly operate in superposition, an implementation must satisfy three properties --- \emph{reversibility}, \emph{history independence}, and \emph{bounded-time execution}. Standard implementations, such as the representation of an abstract set as a hash table, fail these properties, calling for tools to develop specialized implementations.

In this work, we present \CoreLang{}, the first language for quantum programming with random-access memory. \CoreLang{} enables the developer to implement data structures as pointer-based, linked data. It features a reversible semantics enabling every valid program to be translated to a unitary quantum circuit.

We present \AllocatorName{}, the first memory allocator that supports reversible, history-independent, and constant-time dynamic memory allocation in quantum superposition.
We also present \LangName{}, a language for quantum programming with recursively defined data structures. \LangName{} features a type system that bounds all recursion using classical parameters as is necessary for a program to execute on a quantum computer.

Using \LangName{}, we implement \LibraryName{}, the first quantum library of data structures, including lists, stacks, queues, strings, and sets. We provide the first executable implementation of sets that satisfies all three mandated properties of reversibility, history independence, and bounded-time execution.

\end{abstract}

\maketitle

\section{Introduction} \label{sec:intro}

Algorithms for quantum computers promise computational advantages in areas ranging from integer factorization~\citep{shor1997} and search~\citep{grover} to cryptography~\citep{bennett}, linear algebra~\citep{Harrow_2009}, physical simulation~\citep{simulation-physics,simulation-chem}, and combinatorial optimization~\citep{farhi2014}. Enabling practical realization of these algorithms on quantum hardware is a major aim of \emph{quantum programming languages}, in which a developer may write programs that execute on a quantum computer.

The principles of quantum mechanics specify how a quantum computer manipulates \emph{quantum states} consisting of \emph{qubits}, the quantum analogs of classical bits. A quantum state exists in a \emph{superposition}, a weighted sum over classical states. The computer may evolve the state using \emph{unitary operators}, or it may \emph{measure} the state, which causes a superposition to assume a classical state with probability derived mathematically from the weight ascribed to that state in the sum.

\subsection{Data Structures and Quantum Algorithms}

Existing quantum programming languages~\citep{qsharp, quipper, qwire, liquid, qml, quantum-lambda-calc, silq} build abstractions for individual qubits and primitive data types such as integers.
By contrast, emerging quantum algorithms incorporate abstract data structures for sake of implementation and efficiency. The concrete representation of such a data structure classically is a linked structure in random-access memory, defined recursively and constructed by dynamic allocation.

For example, the set data structure can maintain a duplicate-free collection of items, check an element for membership in the collection, and add and remove elements from the collection.
Examples of quantum algorithms that depend critically on sets include \citet{ambainis2003}, who presents an $O(n^{2/3})$ quantum algorithm for element distinctness that improves over the $\Omega(n)$ classical bound; \citet{bernstein2013}, who present a quantum algorithm for subset sum that beats the classical complexity of $O(2^{n/4})$; and \citet{aaronson2019}, who present an $\tilde{O}(n^{2/3})$ quantum algorithm for closest pair, improving over $O(n\log n)$ classically.

Each of these algorithms relies on an implementation of sets whose operations execute in $O(\log^c n)$ time. Classically, random-access memory is used to attain this efficient time complexity, and these algorithms similarly invoke its quantum analogue.
However, to date, no quantum programming language provides abstractions for constructing and manipulating sets and other abstract data types in random-access memory, a hurdle to the realization of these algorithms.

\subsection{Three Requirements for Data Structures}

Importantly, not all classical data structure implementations are compatible with quantum programming. \citet{ambainis2003} identifies a data structure appropriate for use in a quantum algorithm as one that satisfies three properties: \emph{reversibility}, \emph{history independence} and \emph{bounded-time execution}.

\paragraph{Reversible Operation}

The operations of the data structure must each be \emph{reversible}, meaning it can be executed forwards or backwards to produce or undo its effect on the program state.

In order to execute on a quantum computer, a program must first be translated into a sequence of primitive \emph{quantum gates}. Each gate corresponds to either a measurement or a unitary operator over the program state. Conventional operations over a data structure in superposition cannot safely invoke measurement, because measurement does not preserve the superposition of program data. Instead, it irreversibly collapses the data into a classical state, which would cause algorithms such as~\citet{ambainis2003,bernstein2013,aaronson2019} to return incorrect results.

Thus, when a program executes an operation on data in superposition, the operation must correspond to a unitary operator, which has an inverse by definition. As a consequence, the operations of a data structure in superposition must be reversible.

\paragraph{History-Independent Representation}

The representation of the data structure must be \emph{history-independent}, meaning that each instance of the data structure is assigned a single unique physical representation in memory that is insensitive to the history of operations performed on it.

History independence is essential for the correctness or performance of algorithms such as~\citet{ambainis2003,aaronson2019,bernstein2013}. The reason is that they rely on quantum \emph{interference}, the phenomenon in which weights of identical states in a superposition combine or cancel, altering the outcomes of measurements and the output of the algorithm.

For example, the algorithm of~\citet{ambainis2003} generates a superposition over sets of integers. Suppose that among these are two instances of the set $\{1,2\}$ whose weights in the superposition have equal complex magnitude but opposite sign. Then, the correctness of the algorithm relies on quantum interference to guarantee that the probability of observing the set $\{1,2\}$ cancels to zero.

However, suppose that sets are concretely represented using linked lists, such that the first instance is represented as $[1,2]$ and the second as $[2,1]$. Because the memory representations of these two instances are physically distinct, their weights do not cancel, leading the algorithm to produce incorrect outputs.
Thus, for the algorithm to execute correctly, each semantic instance of an abstract data structure must be assigned a unique underlying physical representation.

\paragraph{Bounded-Time Execution}

A data structure operation must execute in \emph{bounded time}, meaning its execution time is bounded by classical parameters not dependent on quantum data.

A recursive definition of a data structure typically corresponds to recursive implementations of its operations. Recursion enables an operation to use variable time complexity to traverse and manipulate a variable-sized data structure.
However, a quantum computer does not natively support recursive control flow and must first classically convert the program to a loop-free form. Hence, all recursion in the program must be bounded by classically determined parameters.

\subsection{\LangName{}: Data Structures in Quantum Superposition}

\paragraph{\CoreLang}

We present \CoreLang{}, the first language that enables quantum programming with data in random-access memory. It enables the construction and manipulation of pointer-based, linked data, by means of operations for integers, pointers, and control flow.

\paragraph{Reversibility}

In \CoreLang{}, every program that can execute forwards can also execute backwards.
A developer may compute an expression by assigning it to a variable, or \emph{uncompute} an expression, which reverses the computation, by \emph{un-assigning} it from a variable. Other features enable the developer to swap two variables, or to swap a variable with the content of memory at an address.

We present a fully reversible operational semantics of \CoreLang{}. Each operator has a reverse within the syntax, meaning a developer may write a program and derive another program with reversed semantics using the syntactic transformation of \emph{program inversion}.

We provide a translation from a valid \CoreLang{} program to a unitary operator that can execute on a quantum computer. Specifically, the operator corresponding to a \CoreLang{} program acts uniformly across each classical program state in a superposition. The operator may then be embedded within a larger quantum algorithm that creates and manipulates weights of superpositions.

\paragraph{\AllocatorName}

We present \AllocatorName{}, the first memory allocator that enables reversible, history-independent, and constant-time dynamic memory allocation in superposition.
A developer may invoke \AllocatorName{} to write a history-independent implementation of a data structure.

\paragraph{History Independence}

To leverage \AllocatorName{}, the developer constructs dynamically allocated, linked data whose linking structure is in one-to-one correspondence with semantic data structure instances. At runtime, \AllocatorName{} stores data into superpositions of allocation sites that are chosen to guarantee that the instance's physical memory representation, including the values of pointers, is unique.

Classically, a memory allocator may establish history independence by randomizing allocation addresses. The implementation of \AllocatorName{} lifts randomization to quantum superposition via \emph{symmetrization}, an approach from quantum simulation~\citep{abrams1997,Berry2017ImprovedTF}.

\paragraph{\LangName}

We combine \CoreLang{} and \AllocatorName{} to build \LangName{}, a programming language for quantum programming with data structures. \LangName{} features co-recursive types and recursive functions that enable the developer to recursively define data structures and operations.

\paragraph{Bounded-Time Execution}
The type system of \LangName{} ensures that all recursive functions are bounded in size by classical parameters. A \LangName{} program lowers to \CoreLang{} by unrolling recursive calls, meaning that it corresponds to a bounded-size quantum circuit.

\subsection{\LibraryName{}: Data Structure Library}

We use \LangName{} to implement \LibraryName{}, the first quantum library of data structures, consisting of lists, stacks, queues, strings, and sets.
Of note, we demonstrate the first executable implementation of sets that is reversible, history-independent, and bounded-time, using radix trees~\citep{morrison} as proposed by~\citet{bernstein2013}.
Prior to this work, we are not aware of any existing quantum implementation of the set data structure that satisfies all three properties.\footnote{Concurrent to this work, \citet{gidney2022} provides an implementation of a quantum dictionary data structure. Their implementation does not use random-access memory and achieves linear time complexity, whereas our implementation achieves poly-logarithmic time complexity. Their work was made public while this work was under conference review.}

As a case study, we describe three challenges we encountered during the development process that do not arise in classical programming, as well as techniques to overcome them:

\begin{itemize}
    \item \emph{Recursive uncomputation} means that functions that uncompute the results of recursive calls may incur exponential slowdown in time complexity. We instead structure a program to avoid having recursive calls be temporary values.
    \item \emph{Mutated uncomputation} means that directly uncomputing temporary variables after the program state has been mutated may result in incorrect outcomes. We instead correctly uncompute values using context from the updated program state.
    \item \emph{Branch sequentialization} means that the time complexity of a conditionally branching quantum program is equal to the sum and not the maximum of its branches. To prevent inefficiency, we structure a program to avoid recursion under branches.
\end{itemize}

\subsection{Contributions}

In this work, we present the following contributions:
\begin{itemize}
    \item \emph{\CoreLang{}} (\Cref{sec:core-semantics}). We present \CoreLang{}, the first language for quantum programming with random-access memory. We formalize its type system, reversible operational semantics, program inversion transformation, and translation into a unitary quantum circuit.
    \item \emph{\AllocatorName{}} (\Cref{sec:memory-allocation}). We present \AllocatorName{}, the first allocator that enables reversible, history-independent, and constant-time dynamic memory allocation in superposition. We present the implementation of \AllocatorName{} using techniques from quantum simulation.
    \item \emph{\LangName{}} (\Cref{sec:main-semantics}). We present \LangName{}, a language that enables bounded-time quantum programming with recursively defined data structures. We formalize its type system, which verifies that recursion is classically bounded, and its lowering into \CoreLang{}.
    \item \emph{\LibraryName{}} (\Cref{sec:data-structure-library}). We present \LibraryName{}, the first quantum library of abstract data structures, including lists, stacks, queues, strings, and sets. We present the first executable set implementation satisfying reversibility, history independence, and bounded-time execution.
    \item \emph{Case Study} (\Crefrange{sec:recursive-uncomputation}{sec:branch-sequentialization}). We characterize three challenges we faced in the library implementation that do not arise in classical programming -- recursive uncomputation, mutated uncomputation, and branch sequentialization -- and techniques to overcome them.
\end{itemize}

\paragraph{Summary}

Our work establishes programming abstractions for data structures in quantum superposition in the form of language design and provides the first quantum library of fundamental data structures. Together, \LangName{}, \AllocatorName{}, and \LibraryName{} enable developers to implement algorithms and construct quantum software utilizing familiar and useful data structures.

\section{Background on Quantum Computation} \label{sec:background}

The following is an overview of key concepts in quantum computation relevant to this work. A comprehensive reference may be found in~\citet{nielsen_chuang_2010}.

\paragraph{Qubit}
The basic unit of quantum information is the \emph{qubit}, a linear combination $\gamma_0 \ket{0} + \gamma_1 \ket{1}$ known as a \emph{superposition}, where $\ket{0}$ and $\ket{1}$ are \emph{basis states} and $\gamma_0, \gamma_1 \in \mathbb{C}$ are \emph{amplitudes} satisfying $\abs{\gamma_0}^2 + \abs{\gamma_1}^2 = 1$ describing relative weights of basis states. Examples of qubits include the classical zero bit $\ket{0}$, classical one bit $\ket{1}$, and the superposition states $\frac{1}{\sqrt{2}}{(\ket{0} + \ket{1})}$ and $\frac{1}{\sqrt{2}}{(i\ket{0} - \ket{1})}$.

\paragraph{Quantum State}
A $2^n$-dimensional \emph{quantum state} $\ket{\qstate}$ is a superposition over $n$-bit strings. For example, $\frac{1}{\sqrt{2}}{(\ket{00}+\ket{11})}$ is a quantum state over two qubits.
Multiple qubits form a quantum state system by means of the tensor product $\otimes$, e.g. the state $\ket{01}$ is equal to the product $\ket{0} \otimes \ket{1}$. As is standard in quantum computing, we also use the notation $\ket{x, y}$ to represent $\ket{x} \otimes \ket{y}$.

\paragraph{Unitary Operators}
A \emph{unitary operator} $U$ is a linear operator on quantum states that preserves inner products and whose inverse is its Hermitian adjoint.
Examples of unitary operators include single-qubit \emph{quantum gates} such as:
\begin{itemize}
    \item X --- bit-flip (NOT) gate, which maps $\ket{0} \mapsto \ket{1}$ and vice versa;
    \item Z --- phase-flip gate, which leaves $\ket{0}$ unchanged and maps $\ket{1} \mapsto -\ket{1}$;
    \item H --- Hadamard gate, which maps $\ket{0} \mapsto \frac{1}{\sqrt{2}}{(\ket{0} + \ket{1})}$ and $\ket{1} \mapsto \frac{1}{\sqrt{2}}{(\ket{0} - \ket{1})}$.
\end{itemize}
Other examples include two-qubit gates, such as controlled-NOT, controlled-Z, and SWAP. The controlled gates perform NOT or Z on their target qubit if their control qubit is in state $\ket{1}$, and the SWAP gate swaps two qubits in a quantum state.

\paragraph{Measurement}
A quantum \emph{measurement} is a probabilistic operation over quantum states. When a qubit $\gamma_0 \ket{0} + \gamma_1 \ket{1}$ is measured,\footnote{In this work, we refer only to projective computational (i.e. 0/1) basis measurements.} the outcome is $\ket{0}$ with probability $\abs{\gamma_0}^2$ and $\ket{1}$ with probability $\abs{\gamma_1}^2$. Measuring a qubit within a quantum state causes the entire state to probabilistically assume one of two outcome states.
To define the outcomes of measuring the first qubit of $\ket{\qstate}$, we first rewrite the state into the form $\ket{\qstate} = \gamma_0 \ket{0} \otimes \ket{\qstate_0}+ \gamma_1 \ket{1} \otimes \ket{\qstate_1}$. Then, with probability $\abs{\gamma_0}^2$ the measurement outcome is $\ket{0} \otimes \ket{\qstate_0}$, and with probability $\abs{\gamma_1}^2$ the outcome is $\ket{1} \otimes \ket{\qstate_1}$.
Outcomes of measuring multiple qubits or qubits other than the first are defined analogously.

\paragraph{Random Access}

Quantum random-access memory (qRAM)\footnotemark{}~\citep{Giovannetti_2008} enables manipulating data at addresses in superposition.
\footnotetext{The abbreviation qRAM is not to be confused with the QRAM model of computation by~\citet{knill}, which does not incorporate quantum random access.}
Let $k$ be the system word size.
The single-qubit \emph{quantum random-access gate} is defined to be the unitary operation that maps~\citep{ambainis2003}:
\[
\ket{i, b, z_1, \ldots, z_m} \mapsto \ket{i, z_i, z_1, \ldots, z_{i-1}, b, z_{i+1}, \ldots, z_m}
\]
where $i$ is a $k$-bit integer, $b$ is a bit, and $\ket{z_i, \ldots, z_m}$ is the qRAM --- an array of $m$ qubits. The effect of this gate is to swap the data at position $i$ in the array $z$ with the data in the register $b$.

In this work, we utilize a multi-qubit random-access gate, in which $b$ and each $z_i$ is a $k$-bit string rather than a single bit. Such a gate may be implemented as a straightforward extension of leading proposals for hardware qRAM~\citep{Paler_2020,Arunachalam_2015,Matteo_2020}.

We account for the cost of data structure operations using a model in which quantum random access takes $O(1)$ time. This accounting facilitates a direct comparison to the classical RAM cost model, in which dereferencing a pointer is modeled as constant-time regardless of the underlying hardware design. It is possible to use an alternative cost model in which quantum random access takes $O(\log m)$ time, which is used by~\citet{ambainis2003,aaronson2019,bernstein2013,buhrman2021}. To account under this model, the reported complexity of each data structure operation should be multiplied by a poly-logarithmic time factor.

\section{Quantum Data Structures in \LangName{}} \label{sec:examples}

In this section, we present data structures in quantum superposition and justify the properties of reversibility, history independence, and bounded-time execution. We illustrate how a developer uses \LangName{} to implement fundamental data structures satisfying these properties.

\subsection{Linked Data in Superposition}

In classical computing, linked data forms the basis of numerous abstract data structures. For example, linked lists may be used to concretely implement abstract stacks, queues, and sets.
We use the notation \pstate{\texttt{l} \goesto [1,2]} to represent a classical program state\footnotemark{} in which the variable \lstinline{l} denotes a linked list storing the integers 1 and 2.
\footnotetext{A classical program state is a bit string, meaning that the set of all program states is finite.}

\paragraph{Superposition}

In a \LangName{} program, the state of all program variables and memory exists in quantum superposition. For example, the quantum state of the program may be in a uniform superposition of three classical states in which \lstinline{l} denotes different linked lists:
\[
    \frac{1}{\sqrt{3}}\left(\ket{\pstate{\texttt{l} \goesto []}} + \ket{\pstate{\texttt{l} \goesto [1,2]}} + \ket{\pstate{\texttt{l} \goesto [1,\ldots,100]}}\right)
\]

\paragraph{Data Structure Operations}
We may lift a classical operation on a list to a unitary operation on a list in superposition. For example, the operation \lstinline{push_front} prepends an element onto every list in the superposition. Executing $\texttt{push\_front}(\texttt{l}, 6)$ would produce a new quantum state:
\[
    \frac{1}{\sqrt{3}}\left(\ket{\pstate{\texttt{l} \goesto [6]}} + \ket{\pstate{\texttt{l} \goesto [6,1,2]}} + \ket{\pstate{\texttt{l} \goesto [6,1,\ldots,100]}}\right)
\]

\paragraph{Memory Representation}

\newcommand{\nullbox}{\,\raisebox{-0.1em}{\scalebox{0.8}{\begin{tikzpicture}
\draw (2.4, 0.6) rectangle (2.8, 1);
\draw (2.4, 0.6) -- (2.8, 1);
\end{tikzpicture}}}\,}

The memory representation of a singly-linked list is a collection of nodes, each consisting of content data and a pointer to the next node, or \lstinline{null} if there is none.

Within a classical program state, \lstinline{null} denotes the empty list, and a pointer to a memory location at which a pair of 1 and \lstinline{null} are stored denotes the singleton list containing the element 1. The program state $\pstate{\texttt{l} \goesto [1,2]}$ is depicted by the following diagram, in which \nullbox{} denotes \lstinline{null}:
\vspace{0.5em}
\begin{center}
\begin{tikzpicture}
\node at (-0.3, 0.8) {\texttt{l}};
\draw (0, 0.6) rectangle (0.4, 1);
\draw[black,fill=black] (0.2, 0.8) circle (.4ex);
\draw[-stealth, thick] (0.2, 0.8) -- (0.75, 0.8);

\draw (0.8, 0.6) rectangle (1.2, 1);
\draw (1.2, 0.6) rectangle (1.6, 1);
\node at (1, 0.8) {1};
\draw[black,fill=black] (1.4, 0.8) circle (.4ex);
\draw[-stealth, thick] (1.4, 0.8) -- (1.95, 0.8);

\draw (2, 0.6) rectangle (2.4, 1);
\draw (2.4, 0.6) rectangle (2.8, 1);
\node at (2.2, 0.8) {2};
\draw (2.4, 0.6) -- (2.8, 1);
\end{tikzpicture}
\end{center}
\subsection{Implementing Linked Data Structures in \LangName{}}

\LangName{} makes it possible to implement linked data structures in quantum superposition. A key aspect of the design of \LangName{} is a set of programming primitives that are each \emph{reversible}, meaning a program can be executed forwards or backwards to produce or reverse its computational effect.

\paragraph{Reversibility}

Reversible primitives correspond to unitary operators that can be realized as gates in a quantum computer (\Cref{sec:background}). Though a quantum computer is also capable of performing irreversible measurement, measurement collapses the program state from superposition. Algorithms such as~\citet{ambainis2003,aaronson2019,bernstein2013} are therefore designed to avoid performing measurement on the data structure in superposition, which if performed would cause the algorithm to produce an incorrect output or lose its efficiency advantage.

Correspondingly, we designed \LangName{} with reversible primitives that make it possible to manipulate data structures without collapsing their superposition.

\begin{figure}
\centering
\vspace*{-.5em}
\begin{minipage}[t]{.5\textwidth}
\begin{lstlisting}[label={lst:push-front}]
type list = (uint, ptr<list>);
fun push_front(l: ptr<list>, x: uint) {
  let head <- alloc<list>;
  l <-> head;                /* Swap */
  let node <- (x, head);
  let head -> node.2;        /* Uncompute */
  *l <-> node;               /* Swap */
  let node -> default<list>; /* Uncompute */
  return ();
}
\end{lstlisting}
\end{minipage}%
\begin{minipage}[t]{.5\textwidth}
\begin{lstlisting}[label={lst:pop-front}]
fun pop_front(l: ptr<list>) -> uint {
  let node <- default<list>;
  *l <-> node;
  let head <- node.2;
  let x <- node.1;  /* Get output */
  let node -> (x, head);
  l <-> head;
  let head -> alloc<list>;
  return x;
}
\end{lstlisting}
\end{minipage}

\setlength{\abovecaptionskip}{0pt}
\setlength{\belowcaptionskip}{-1em}
\begin{minipage}[b]{0.5\textwidth}
\captionof{figure}{Implementation of \lstinline{push_front}.} \label{fig:push-front}
\end{minipage}%
\begin{minipage}[b]{0.5\textwidth}
\captionof{figure}{Implementation of \lstinline{pop_front}.} \label{fig:pop-front}
\end{minipage}
\end{figure}

\paragraph{Implementing \texttt{push\_front}}

In~\Cref{fig:push-front}, we present the \LangName{} implementation of the operation $\texttt{push\_front}$ on linked lists. When executed forwards, $\texttt{push\_front}(\texttt{l}, 6)$  prepends 6 to the list \lstinline{l}, and when executed backwards, it removes 6 from the front of \lstinline{l}. At a high level, the function initializes a new list node that stores \lstinline{x} and points to the input list, and then points \lstinline{l} to this node.

\begin{figure}
\centering
\begin{subfigure}[t]{.2\textwidth}
\centering
\scalebox{0.8}{
\begin{tikzpicture}
\node at (1.3, -0.4) {\vphantom{l}};
\node at (-0.3, 0.2) {\texttt{x}};
\draw (0, 0) rectangle (0.4,0.4);
\node at (0.2, 0.2) {6};

\node at (-0.3, 0.8) {\texttt{l}};
\draw (0, 0.6) rectangle (0.4, 1);
\draw[black,fill=black] (0.2, 0.8) circle (.4ex);
\draw[-stealth, thick] (0.2, 0.8) -- (0.75, 0.8);

\draw (0.8, 0.6) rectangle (1.2, 1);
\draw (1.2, 0.6) rectangle (1.6, 1);
\node at (1, 0.8) {1};
\draw[black,fill=black] (1.4, 0.8) circle (.4ex);
\draw[-stealth, thick] (1.4, 0.8) -- (1.95, 0.8);

\draw (2, 0.6) rectangle (2.4, 1);
\draw (2.4, 0.6) rectangle (2.8, 1);
\node at (2.2, 0.8) {2};
\draw (2.4, 0.6) -- (2.8, 1);
\end{tikzpicture}}
\caption{Initial state}
\label{fig:push-step-1}
\end{subfigure}\quad%
\begin{subfigure}[t]{.2\textwidth}
\centering
\scalebox{0.8}{
\begin{tikzpicture}
\node at (1.3, -0.4) {\lstinline[style=color]{let head <- alloc(list);}};
\node at (-0.3, 0.2) {\texttt{x}};
\draw (0, 0) rectangle (0.4,0.4);
\node at (0.2, 0.2) {6};

\node at (-0.3, 0.8) {\texttt{l}};
\draw (0, 0.6) rectangle (0.4, 1);
\draw[black,fill=black] (0.2, 0.8) circle (.4ex);
\draw[-stealth, thick] (0.2, 0.8) -- (0.75, 0.8);

\draw (0.8, 0.6) rectangle (1.2, 1);
\draw (1.2, 0.6) rectangle (1.6, 1);
\node at (1, 0.8) {1};
\draw[black,fill=black] (1.4, 0.8) circle (.4ex);
\draw[-stealth, thick] (1.4, 0.8) -- (1.95, 0.8);

\draw (2, 0.6) rectangle (2.4, 1);
\draw (2.4, 0.6) rectangle (2.8, 1);
\node at (2.2, 0.8) {2};
\draw (2.4, 0.6) -- (2.8, 1);

\node[red!90!black] at (-0.35, 1.37) {\footnotesize \underline{\texttt{head}}};
\draw[red!90!black] (0, 1.2) rectangle (0.4, 1.6);
\draw[red!90!black, fill=red!90!black] (0.2, 1.4) circle (.4ex);
\draw[-stealth, thick, red!90!black] (0.2, 1.4) -- (0.75, 1.4);

\draw[red!90!black] (0.8, 1.2) rectangle (1.2, 1.6);
\draw[red!90!black] (1.2, 1.2) rectangle (1.6, 1.6);
\node[red!90!black] at (1, 1.4) {0};
\draw[red!90!black] (1.2, 1.2) -- (1.6, 1.6);
\end{tikzpicture}}
\caption{After line~\ref{lst:push-front-3}}
\label{fig:push-step-2}
\end{subfigure}\quad%
\begin{subfigure}[t]{.2\textwidth}
\centering
\scalebox{0.8}{
\begin{tikzpicture}
\node at (1.3, -0.4) {\lstinline[style=color]{l <-> head;}};
\node at (-0.3, 0.2) {\texttt{x}};
\draw (0, 0) rectangle (0.4,0.4);
\node at (0.2, 0.2) {6};

\node at (-0.3, 0.8) {\texttt{l}};
\draw (0, 0.6) rectangle (0.4, 1);
\draw[green!40!black, fill=green!40!black] (0.2, 0.8) circle (.4ex);
\draw[-stealth, thick, green!40!black] (0.2, 0.8) -- (0.75, 1.4);

\draw (0.8, 0.6) rectangle (1.2, 1);
\draw (1.2, 0.6) rectangle (1.6, 1);
\node at (1, 0.8) {1};
\draw[black,fill=black] (1.4, 0.8) circle (.4ex);
\draw[-stealth, thick] (1.4, 0.8) -- (1.95, 0.8);

\draw (2, 0.6) rectangle (2.4, 1);
\draw (2.4, 0.6) rectangle (2.8, 1);
\node at (2.2, 0.8) {2};
\draw (2.4, 0.6) -- (2.8, 1);

\node at (-0.35, 1.4) {\footnotesize \texttt{head}};
\draw (0, 1.2) rectangle (0.4, 1.6);
\draw[green!40!black, fill=green!40!black] (0.2, 1.4) circle (.4ex);
\draw[-stealth, thick, green!40!black] (0.2, 1.4) -- (0.75, 0.8);

\draw (0.8, 1.2) rectangle (1.2, 1.6);
\draw (1.2, 1.2) rectangle (1.6, 1.6);
\node at (1, 1.4) {0};
\draw (1.2, 1.2) -- (1.6, 1.6);
\end{tikzpicture}}
\caption{After line~\ref{lst:push-front-4}}
\label{fig:push-step-3}
\end{subfigure}\quad%
\begin{subfigure}[t]{.2\textwidth}
\centering
\scalebox{0.8}{
\begin{tikzpicture}
\node at (1.3, -0.4) {\lstinline[style=color]{let node <- (x, head);}};
\node at (-0.3, 0.2) {\texttt{x}};
\draw (0, 0) rectangle (0.4,0.4);
\node[blue] at (0.2, 0.2) {6};

\node at (-0.3, 0.8) {\texttt{l}};
\draw (0, 0.6) rectangle (0.4, 1);
\draw[black, fill=black] (0.2, 0.8) circle (.4ex);
\draw[-stealth, thick] (0.2, 0.8) -- (0.75, 0.8);

\draw (0.8, 1.2) rectangle (1.2, 1.6);
\draw (1.2, 1.2) rectangle (1.6, 1.6);
\node at (1, 1.4) {1};
\draw[black,fill=black] (1.4, 1.4) circle (.4ex);
\draw[-stealth, thick] (1.4, 1.4) -- (1.95, 1.4);

\draw (2, 1.2) rectangle (2.4, 1.6);
\draw (2.4, 1.2) rectangle (2.8, 1.6);
\node at (2.2, 1.4) {2};
\draw (2.4, 1.2) -- (2.8, 1.6);

\node at (-0.35, 1.4) {\footnotesize \texttt{head}};
\draw (0, 1.2) rectangle (0.4, 1.6);
\draw[blue, fill=blue] (0.2, 1.4) circle (.4ex);
\draw[-stealth, thick, blue] (0.2, 1.4) -- (0.75, 1.4);

\draw (0.8, 0.6) rectangle (1.2, 1);
\draw (1.2, 0.6) rectangle (1.6, 1);
\node at (1, 0.8) {0};
\draw (1.2, 0.6) -- (1.6, 1);

\node[red!90!black] at (-0.35, 1.97) {\footnotesize \underline{\texttt{node}}};
\draw[red!90!black] (0, 1.8) rectangle (0.4, 2.2);
\node[red!90!black] at (0.2, 2) {6};
\draw[red!90!black] (0.4, 1.8) rectangle (0.8, 2.2);
\draw[red!90!black, fill=red!90!black] (0.6, 2) circle (.4ex);
\draw[-stealth, thick, red!90!black] (0.6, 2) to[out=0,in=120] (1, 1.65);
\end{tikzpicture}}
\caption{After line~\ref{lst:push-front-5}}
\label{fig:push-step-4}
\end{subfigure}

\vspace{5pt}

\begin{subfigure}[t]{.2\textwidth}
\centering
\scalebox{0.8}{
\begin{tikzpicture}
\node at (1.3, -0.4) {\lstinline[style=color]{let head -> node.2;}};
\node at (-0.3, 0.2) {\texttt{x}};
\draw (0, 0) rectangle (0.4,0.4);
\node at (0.2, 0.2) {6};

\node at (-0.3, 0.8) {\texttt{l}};
\draw (0, 0.6) rectangle (0.4, 1);
\draw[black, fill=black] (0.2, 0.8) circle (.4ex);
\draw[-stealth, thick] (0.2, 0.8) -- (0.75, 0.8);

\node[gray] at (-0.35, 1.4) {\footnotesize \sout{\texttt{head}}};
\draw[gray] (0, 1.2) rectangle (0.4, 1.6);
\draw[gray, fill=gray] (0.2, 1.4) circle (.4ex);
\draw[-stealth, thick, gray] (0.2, 1.4) -- (0.75, 1.4);

\draw (0.8, 0.6) rectangle (1.2, 1);
\draw (1.2, 0.6) rectangle (1.6, 1);
\node at (1, 0.8) {0};
\draw (1.2, 0.6) -- (1.6, 1);

\node at (-0.35, 2) {\footnotesize \texttt{node}};
\draw (0, 1.8) rectangle (0.4, 2.2);
\node at (0.2, 2) {6};
\draw (0.4, 1.8) rectangle (0.8, 2.2);
\draw[blue, fill=blue] (0.6, 2) circle (.4ex);
\draw[-stealth, thick, blue] (0.6, 2) to[out=0,in=120] (1, 1.65);

\draw (0.8, 1.2) rectangle (1.2, 1.6);
\draw (1.2, 1.2) rectangle (1.6, 1.6);
\node at (1, 1.4) {1};
\draw[black,fill=black] (1.4, 1.4) circle (.4ex);
\draw[-stealth, thick] (1.4, 1.4) -- (1.95, 1.4);

\draw (2, 1.2) rectangle (2.4, 1.6);
\draw (2.4, 1.2) rectangle (2.8, 1.6);
\node at (2.2, 1.4) {2};
\draw (2.4, 1.2) -- (2.8, 1.6);
\end{tikzpicture}}
\caption{After line~\ref{lst:push-front-6}}
\label{fig:push-step-5}
\end{subfigure}\qquad%
\begin{subfigure}[t]{.2\textwidth}
\centering
\scalebox{0.8}{
\begin{tikzpicture}
\node at (1.3, -0.4) {\lstinline[style=color]{*l <-> node;}};
\node at (-0.3, 0.2) {\texttt{x}};
\draw (0, 0) rectangle (0.4,0.4);
\node at (0.2, 0.2) {6};

\node at (-0.3, 0.8) {\texttt{l}};
\draw (0, 0.6) rectangle (0.4, 1);
\draw[black, fill=black] (0.2, 0.8) circle (.4ex);
\draw[-stealth, thick] (0.2, 0.8) -- (0.75, 0.8);

\draw (0.8, 0.6) rectangle (1.2, 1);
\draw (1.2, 0.6) rectangle (1.6, 1);
\node[green!40!black] at (1, 0.8) {6};
\draw[green!40!black, fill=green!40!black] (1.4, 0.8) circle (.4ex);
\draw[-stealth, thick, green!40!black] (1.4, 0.8) -- (1.4, 1.35);

\node at (-0.35, 1.6) {\footnotesize \texttt{node}};
\draw (0, 1.4) rectangle (0.4, 1.8);
\node[green!40!black] at (0.2, 1.6) {0};
\draw (0.4, 1.4) rectangle (0.8, 1.8);
\draw[green!40!black] (0.4, 1.4) -- (0.8, 1.8);

\draw (1.2, 1.4) rectangle (1.6, 1.8);
\draw (1.6, 1.4) rectangle (2, 1.8);
\node at (1.4, 1.6) {1};
\draw[black,fill=black] (1.8, 1.6) circle (.4ex);
\draw[-stealth, thick] (1.8, 1.6) -- (2.35, 1.6);

\draw (2.4, 1.4) rectangle (2.8, 1.8);
\draw (2.8, 1.4) rectangle (3.2, 1.8);
\node at (2.6, 1.6) {2};
\draw (2.8, 1.4) -- (3.2, 1.8);
\end{tikzpicture}}
\caption{After line~\ref{lst:push-front-7}}
\label{fig:push-step-6}
\end{subfigure}\qquad%
\begin{subfigure}[t]{.2\textwidth}
\centering
\scalebox{0.8}{
\begin{tikzpicture}
\node at (1.3, -0.4) {\lstinline[style=color]{let node -> default<list>;}};
\node at (-0.3, 0.2) {\texttt{x}};
\draw (0, 0) rectangle (0.4,0.4);
\node at (0.2, 0.2) {6};

\node at (-0.3, 0.8) {\texttt{l}};
\draw (0, 0.6) rectangle (0.4, 1);
\draw[black, fill=black] (0.2, 0.8) circle (.4ex);
\draw[-stealth, thick] (0.2, 0.8) -- (0.75, 0.8);

\draw (0.8, 0.6) rectangle (1.2, 1);
\draw (1.2, 0.6) rectangle (1.6, 1);
\node at (1, 0.8) {6};
\draw[black, fill=black] (1.4, 0.8) circle (.4ex);
\draw[-stealth, thick] (1.4, 0.8) -- (1.4, 1.35);

\node[gray] at (-0.35, 1.6) {\footnotesize \sout{\texttt{node}}};
\draw[gray] (0, 1.4) rectangle (0.4, 1.8);
\node[gray] at (0.2, 1.6) {0};
\draw[gray] (0.4, 1.4) rectangle (0.8, 1.8);
\draw[gray] (0.4, 1.4) -- (0.8, 1.8);

\draw (1.2, 1.4) rectangle (1.6, 1.8);
\draw (1.6, 1.4) rectangle (2, 1.8);
\node at (1.4, 1.6) {1};
\draw[black,fill=black] (1.8, 1.6) circle (.4ex);
\draw[-stealth, thick] (1.8, 1.6) -- (2.35, 1.6);

\draw (2.4, 1.4) rectangle (2.8, 1.8);
\draw (2.8, 1.4) rectangle (3.2, 1.8);
\node at (2.6, 1.6) {2};
\draw (2.8, 1.4) -- (3.2, 1.8);
\end{tikzpicture}}
\caption{After line~\ref{lst:push-front-8}}
\label{fig:push-step-7}
\end{subfigure}
\caption{Program states after executing each line of $\texttt{push\_front}(\texttt{l}, 6)$, where \texttt{l} denotes $[1,2]$. In each step, newly introduced features are shown in \textcolor{red!90!black}{red}, removed features are shown in \textcolor{gray}{gray}, and swapped features are shown in \textcolor{green!40!black}{green}. Features used for computing or uncomputing other features are shown in \textcolor{blue}{blue}.} \label{fig:push-steps}
\end{figure}

In~\Cref{fig:push-steps}, we depict the effect of each line of the procedure diagrammatically.
In~\Cref{fig:push-step-1}, we show the initial program state of $\texttt{push\_front}(\texttt{l},6)$, assuming \lstinline{l} denotes $[1,2]$. In~\Cref{fig:push-step-2}, we depict the effect of line~\ref{lst:push-front-3}, which uses the \emph{assignment} operator \lstinline{<-} to assign the variable \lstinline{head} to a new list node in memory, allocated using the \lstinline{alloc} operator and initialized to an empty default value. Line~\ref{lst:push-front-4} (\Cref{fig:push-step-3}) uses the \emph{swap} operator \lstinline{<->} to reversibly swap the values of \lstinline{head} and \lstinline{l}. Line~\ref{lst:push-front-5} (\Cref{fig:push-step-4}) assigns \lstinline{node} to be a pair of the input \lstinline{x} and \lstinline{head}.

\paragraph{Uncomputation}
At this point, the variable \lstinline{head} is no longer useful.
Line~\ref{lst:push-front-6} (\Cref{fig:push-step-5}) uses the \emph{un-assignment} operator \lstinline{->} to \emph{un-assign} \lstinline{head} from the second field of \lstinline{node}, meaning that it \emph{uncomputes} the value of the variable \lstinline{head} and then discards the variable from the program.

Uncomputation~\citep{bennett1973} is the action of reversing the computation that produces the value of a variable, restoring that variable's value to zero.
The reason uncomputation is necessary here is that by the design of \LangName{}, every program is reversible, meaning that 1) a program may only discard a variable after restoring its value to zero, and 2) no language primitive enables the program to generally erase any unspecified value to zero. If the forward execution of the program were permitted to discard or erase an unspecified value, then the reverse execution would be forced to correctly re-materialize that discarded or erased value, which is impossible.

On line~\ref{lst:push-front-6}, the program uses the un-assignment operator to restore the value of \lstinline{head} to zero.\footnote{\LangName{} also supports other operators that abstract the management of temporary values, which we discuss in~\Cref{sec:main-semantics}. We focus here on the un-assignment operator for explanatory purposes.} Specifically, the effect of line~\ref{lst:push-front-6} is defined as the reverse of that of the statement:
\begin{lstlisting}[style=nonum]
let head <- node.2;
\end{lstlisting}
An assignment such as this one may be interpreted as a reversible sequence of steps. First, \lstinline{head} is initialized to zero. Then, each bit of \lstinline{head} is flipped if and only if the corresponding bit of \lstinline{node.2} is set. The result is that the values of \lstinline{head} and \lstinline{node.2} are equal.

The un-assignment on line~\ref{lst:push-front-6} may be interpreted as the above sequence of steps in reverse. It starts with \lstinline{head} and \lstinline{node.2} having equal values, and performs the same bit flips on \lstinline{head} conditioned on \lstinline{node.2}. The result is that \lstinline{head} has a value of zero, which the program safely discards.

Continuing on, line~\ref{lst:push-front-7} (\Cref{fig:push-step-6}) swaps \lstinline{node} with the contents of memory at address \lstinline{l}. Recalling that after line~\ref{lst:push-front-4}, \lstinline{l} stores an empty default node, it is then the case that after line~\ref{lst:push-front-7}, \lstinline{l} points to the new list head and \lstinline{node} stores an empty node, which is then uncomputed by line~\ref{lst:push-front-8} (\Cref{fig:push-step-7}).

\paragraph{Reversing Operation}

The reversibility of \LangName{} programs makes it simple to implement the closely related operation of \lstinline{pop_front}, which removes the first element from a list and returns the element. Indeed, the implementation of \lstinline{pop_front} is nearly the syntactic reverse of \lstinline{push_front}.

In~\Cref{fig:pop-front}, we present \lstinline{pop_front}. It first reverses line~\ref{lst:push-front-8} of \lstinline{push_front}, replacing un-assignment with assignment, and then line~\ref{lst:push-front-7}, which remains identical --- the reverse of a swap operation is itself. It reverses line~\ref{lst:push-front-6} of \lstinline{push_front}, assigning the second field of \lstinline{node} to \lstinline{head}. Line~\ref{lst:pop-front-5} of \lstinline{pop_front} is new to this function --- it obtains \lstinline{x} from the first field of \lstinline{node} so that \lstinline{x} may be returned. The following lines reverse the first three lines of \lstinline{push_front}, and the function returns \lstinline{x}.

\subsection{Dynamic Memory Allocation with \AllocatorName{}}

Sets are foundational to programming and critical to algorithms. Surprisingly, many implementations of sets such as lists, hash tables, and binary trees are unsuitable for quantum programming.

\paragraph{Interference} \label{sec:interference}

Algorithms such as~\citet{ambainis2003,aaronson2019,bernstein2013,buhrman2021} rely on \emph{interference}, the constructive or destructive interaction of amplitudes in a quantum state.
For example,~\citet{ambainis2003} generates a superposition over program states $\pstate{\texttt{s} \goesto s_i}$, meaning that variable \texttt{s} stores a set of integers $s_i$. The correctness of the algorithm relies on the fact that if $s_1$ and $s_2$ denote sets containing the same integers, then their amplitudes interfere. In particular, suppose that $s_1$ and $s_2$ both denote the set $\{1, 2\}$. Then, the algorithm requires that:
\[
    \gamma \ket{\pstate{\texttt{s} \goesto s_1}} - \gamma \ket{\pstate{\texttt{s} \goesto s_2}} = 0
\]

\paragraph{History Independence}

However, suppose that sets are implemented using linked lists, where $s_1$ is represented as $[1,2]$ and $s_2$ as $[2,1]$. Because $\pstate{\texttt{s} \goesto [1,2]}$ and $\pstate{\texttt{s} \goesto [2,1]}$ are physically distinct states, these terms do not cancel, causing the algorithm to produce incorrect outputs.

For the algorithm to produce correct outputs, the implementation of the data structure must satisfy the property of \emph{history independence}.
We adapt its definition from \citet{teague2001}:\footnotemark{}
\footnotetext{This definition is termed the ``strong'' variant of history independence in the original work. Conceptually similar definitions are found in cryptographic models of computing, such as~\citet{wang2014}.}
\begin{definition}[History Independence]
    An implementation of an abstract data structure is \emph{history-independent} if and only if, when any two sequences of abstract operations yield semantically equivalent data structure instances, then the physical representations of the instances are identical.
\end{definition}

The implementation of a set as an unsorted linked list is not history-independent, as the lists $[1,2]$ and $[2,1]$ both represent the same set $\{1,2\}$.
Similarly, implementations of sets such as balanced binary trees or hash tables\footnotemark{} are not history-independent and are thus unsuitable for use in quantum algorithms that leverage interference on data structures.
\footnotetext{Other reasons why hash tables are unsuitable are 1) removal from a hash table is not directly reversible, and 2) all operations assume their worst case $O(n)$ rather than average case $O(1)$ complexity when all recursion is classically unfolded.}

\paragraph{Unique Linking Structure}

A history-independent implementation must assign each semantic instance of the data structure a unique linking structure in the underlying representation.
For example, suppose a set is instead implemented as a sorted list without duplicates. The set $\{1,2\}$ is then uniquely represented as the list $[1,2]$, ostensibly satisfying history independence.

\paragraph{Memory Addresses}

However, a unique linking structure does not by itself guarantee history independence.
In the example, the issue is that $\pstate{\texttt{s} \goesto [1,2]}$ may still refer to physically distinct program states in which the pointers to the linked list nodes take on different values.

\begin{figure}
\centering
\begin{minipage}[t]{0.54\textwidth}
\begin{subfigure}[t]{.5\textwidth}
\centering
\begin{tikzpicture}
\draw (0,0.7) rectangle (0.4,1.1);
\draw[black,fill=black](0.2, 0.9) circle (.4ex);
\draw[-stealth, thick] (0.2, 0.9) to (0.2, 0.42);
\node at (-0.25, 0.9) {\texttt{s}};

\draw (0,0) rectangle (0.4,0.4);
\draw (0.4,0) rectangle (0.8,0.4);
\draw (0.8,0) rectangle (1.2,0.4);
\draw (1.2,0) rectangle (1.6,0.4);
\draw[fill=gray!30] (1.6,0) rectangle (2,0.4);
\draw[fill=gray!30] (2,0) rectangle (2.4,0.4);

\node at (0.2,0.2) {$1$};
\draw[black,fill=black] (0.6,0.2) circle (.4ex);
\node at (1,0.2) {$2$};
\draw (1.2,0) -- (1.6,0.4);

\draw[-stealth, thick] (0.6,0.2) to[out=90,in=120,distance=0.4cm] (1,0.42);
\end{tikzpicture}
\captionof{figure}{One possible memory representation of $\pstate{\texttt{s} \goesto [1,2]}$.}
\label{fig:state-rep-1}
\end{subfigure}\hspace{1em}%
\begin{subfigure}[t]{.4\textwidth}
\centering
\begin{tikzpicture}
\draw (0,0.7) rectangle (0.4,1.1);
\draw[black,fill=black](0.2, 0.9) circle (.4ex);
\draw[-stealth, thick] (0.2, 0.9) to[out=0,in=90,distance=0.6cm] (1.8, 0.42);
\node at (-0.25, 0.9) {\texttt{s}};

\draw (0,0) rectangle (0.4,0.4);
\draw (0.4,0) rectangle (0.8,0.4);
\draw[fill=gray!30] (0.8,0) rectangle (1.2,0.4);
\draw[fill=gray!30] (1.2,0) rectangle (1.6,0.4);
\draw (1.6,0) rectangle (2,0.4);
\draw (2,0) rectangle (2.4,0.4);

\node at (1.8,0.2) {$1$};
\draw[black,fill=black] (2.2,0.2) circle (.4ex);
\node at (0.2,0.2) {$2$};
\draw (0.4,0) -- (0.8,0.4);
\draw[-stealth, thick] (2.2,0.2) to[out=80,in=60,distance=0.55cm] (0.2,0.42);
\end{tikzpicture}
\captionof{figure}{A distinct representation of the same state.}
\label{fig:state-rep-2}
\end{subfigure}

\vspace{0.5em}

\begin{subfigure}[t]{\textwidth}
\centering
\begin{tikzpicture}
\begin{scope}[yshift=1.75cm]
\node at (-1.6, 0.2) {$\dfrac{1}{\textcolor{gray}{\sqrt{6}}}$};

\draw[thick] (-1.2,-0.3) -- (-1.2, 0.7);
\draw[thick] (2.6,-0.3) -- (2.7, 0.2);
\draw[thick] (2.7,0.2) -- (2.6, 0.7);

\draw (-0.7, 0) rectangle (-0.3, 0.4);
\draw (0,0) rectangle (0.4,0.4);
\draw (0.4,0) rectangle (0.8,0.4);
\draw (0.8,0) rectangle (1.2,0.4);
\draw (1.2,0) rectangle (1.6,0.4);
\draw[fill=gray!30] (1.6,0) rectangle (2,0.4);
\draw[fill=gray!30] (2,0) rectangle (2.4,0.4);

\node at (-0.95, 0.2) {\texttt{s}};
\draw[black,fill=black] (-0.5,0.2) circle (.4ex);
\node at (0.2,0.2) {$1$};
\draw[black,fill=black] (0.6,0.2) circle (.4ex);
\node at (1,0.2) {$2$};
\draw (1.2,0) -- (1.6,0.4);
\draw[-stealth, thick] (0.6,0.2) to[out=90,in=120,distance=0.4cm] (1,0.42);
\draw[-stealth, thick] (-0.5,0.2) to (-0.02,0.2);
\end{scope}

\node at (0.7, 1.1) {$+\ \cdots\ + $};

\node at (-1.6, 0.2) {$\dfrac{1}{\textcolor{gray}{\sqrt{6}}}$};

\draw[thick] (-1.2,-0.3) -- (-1.2, 0.7);
\draw[thick] (2.6,-0.3) -- (2.7, 0.2);
\draw[thick] (2.7,0.2) -- (2.6, 0.7);

\draw (-0.7, 0) rectangle (-0.3, 0.4);
\draw [fill=gray!30] (0,0) rectangle (0.4,0.4);
\draw [fill=gray!30] (0.4,0) rectangle (0.8,0.4);
\draw (0.8,0) rectangle (1.2,0.4);
\draw (1.2,0) rectangle (1.6,0.4);
\draw (1.6,0) rectangle (2,0.4);
\draw (2,0) rectangle (2.4,0.4);

\node at (-0.95, 0.2) {\texttt{s}};
\draw[black,fill=black] (-0.5,0.2) circle (.4ex);
\node at (1.8,0.2) {$1$};
\draw[black,fill=black] (2.2,0.2) circle (.4ex);
\node at (1,0.2) {$2$};
\draw (1.2,0) -- (1.6,0.4);
\draw[-stealth, thick] (2.2,0.2) to[out=90,in=60,distance=0.4cm] (1,0.42);
\draw[-stealth, thick] (-0.5,0.2) to[out=-90,in=-120,distance=0.4cm] (1.8,-0.02);
\end{tikzpicture}
\captionof{figure}{\AllocatorName{}'s unique superposition representation.}
\label{fig:state-rep-3}
\end{subfigure}
\end{minipage}%
\begin{minipage}[t]{0.46\textwidth}
\vspace*{-3em}
\begin{lstlisting}[label={lst:find-pos}]
fun find_pos[n](l: ptr<list>, x: uint,
                acc: uint) -> uint {
  with {
    let node <- default<list>;
    *l <-> node;
    let (this_x, next) <- node;
    let r <- acc + 1;
    let eq <- this_x == x;
  } do if eq {
    let out <- acc;
  } else {
    let out <- find_pos[n-1](next, x, r);
  }
  return out;
}
\end{lstlisting}
\end{minipage}

\setlength{\abovecaptionskip}{0pt}
\begin{minipage}[t]{0.5\textwidth}
\addtocounter{figure}{-1}
\captionof{figure}{Distinct physical memory representations of the same program state in a qRAM of size 6.}
\end{minipage}%
\hspace{1.5em}\begin{minipage}[t]{0.43\textwidth}
\captionof{figure}{Implementation of \lstinline{find_pos}.} \label{fig:find-pos}
\end{minipage}
\end{figure}

For example,~\Cref{fig:state-rep-1} depicts a program state in which \texttt{s} points to memory location 1, which stores value 1 and a pointer to memory location 3, which stores value 2 and \lstinline{null}. \Cref{fig:state-rep-2} depicts a state in which \texttt{s} points to memory location 5, which stores 1 and a pointer to memory location 1, which stores 2 and \lstinline{null}. These two program states are physically distinct but semantically equivalent, causing quantum interference to fail.

\paragraph{\AllocatorName{} Allocator}

We resolve this issue by developing \AllocatorName{}, a memory allocator that ensures that addresses of all dynamic allocations in the program are history-independent.
Under \AllocatorName{}, the single unique program state in which \texttt{s} denotes $[1,2]$ is~\Cref{fig:state-rep-3}, in which the linked list nodes are stored in a uniform superposition of all possible allocation sites.

\paragraph{Radix Trees}

The sorted list implementation of sets has $O(n)$ operation complexity, which is asymptotically inefficient.
By contrast, \citet{bernstein2013} propose using a radix tree~\citep{morrison}, whose linking structure is in one-to-one correspondence with sets and which supports operations in $O(\log n)$ time.
In this work, we construct an efficient and history-independent implementation of sets by combining the radix tree with the \AllocatorName{} dynamic memory allocator.

\subsection{Recursive Data Structure Operations in \LangName{}}

The last tool needed to implement operations on linked data structures is recursion. We demonstrate recursion in \LangName{} by implementing the function \lstinline{find_pos}, which returns the position of an integer \lstinline{x} in a list \lstinline{l}, assuming it is present in the list. This operation traverses to the end of the list, and a natural implementation performs this traversal recursively.

\paragraph{Recursion Bounds}

Classically, a recursive call pushes a return address onto the program stack and sets the program counter to the entry point of the function.
By contrast, a quantum processor does not possess a program counter or the ability to loop potentially infinitely.
All recursive procedures must instead be unrolled to a fixed, classically known depth, and inlined.

\LangName{} guarantees that fully unrolling a recursive function is always possible using a classical \emph{recursion bound} specifying the number of recursion levels. The \LangName{} type system verifies that in a program, all recursive function calls satisfy their respective recursion bound.

\paragraph{Implementing \texttt{find\_pos}}

In~\Cref{fig:find-pos}, we implement \lstinline{find_pos} in \LangName{} by recursively traversing \lstinline{l} and comparing each element with \lstinline{x}.
The implementation takes an accumulator argument \lstinline{acc}, such that calling the function with zero for \lstinline{acc} returns the position of \lstinline{x}.

On line~\ref{lst:find-pos-1}, the bound \lstinline{[n]} denotes that \lstinline{find_pos} may recursively call itself at most \lstinline{n} times.
Lines~\ref{lst:find-pos-3}--\ref{lst:find-pos-13} feature a \lstinline{with}-\lstinline{do} statement, syntactic sugar that facilitates uncomputation. \LangName{} executes this statement by executing the body of \lstinline{with}, then the \lstinline{do}, then the reverse of the \lstinline{with}. In this program, reversing the \lstinline{with}-block uncomputes \lstinline{node}, \lstinline{this_x}, \lstinline{next}, \lstinline{r}, and \lstinline{eq}.

Line~\ref{lst:find-pos-5} swaps the contents of memory at address \lstinline{l} with an empty node \lstinline{node}, and line~\ref{lst:find-pos-6} extracts from \lstinline{node} the content \lstinline{this_x} and pointer \lstinline{next} to the tail.
Line~\ref{lst:find-pos-8} computes \lstinline{eq}, a flag for whether \lstinline{this_x} is equal to \lstinline{x}. If it is, line~\ref{lst:find-pos-10} returns the current position, and if it is not, line~\ref{lst:find-pos-12} recursively calls \lstinline{find_pos} on the tail of the list, with a recursion bound that has decreased by one.

\section{\CoreLang{}: Reversible and Circuit-Based Semantics} \label{sec:core-semantics}

This section presents \CoreLang{}, an imperative programming language that enables quantum programming with random-access memory. The operational semantics of the language is fully reversible, enabling a program to execute forwards or backwards. A \CoreLang{} program admits \emph{inversion}, which syntactically transforms it into another program with reversed semantics. A program also admits a syntactic translation into a unitary quantum circuit. In~\Cref{sec:main-semantics}, we extend \CoreLang{} with co-recursive data types and recursion to produce \LangName{}.

\subsection{Syntax}

The syntax of \CoreLang{} consists of types, values, expressions, and statements.
\begin{align*}
    \textsf{Type}\ \type \Coloneqq{} & \tUnit \mid \tUInt \mid \tBool \mid \tPair{\type_1}{\type_2} \mid \tPtr{\type} \\
    \textsf{Value}\ \vValue \Coloneqq{} & x \mid \vUnit \mid \ePair{x_1}{x_2} \mid \vNum{n} \mid \vTrue \mid \vFalse \mid \vNull{\type} \mid \vPtr{\type}{p} \hspace{1em} (n \in \textsf{UInt}, p \in \textsf{Addr}) \\
    \textsf{Expression}\ \eExp \Coloneqq{} & \vValue \mid \eProj{1}{x} \mid \eProj{2}{x} \mid \eUnop{uop}{x} \mid \eBinop{x_1}{bop}{x_2} \\
    \textsf{Operator}\ uop \Coloneqq{} & \oNot \mid \oTest \hspace{2em} bop \Coloneqq{} \oAnd \mid \oOr \mid \oAdd \mid \oSub \mid \oMul \\
    \textsf{Statement}\ \sStmt \Coloneqq{} & \sSkip \mid \sSeq{\sStmt_1}{\sStmt_2} \mid \sBind{x}{\eExp} \mid \sUnbind{x}{\eExp} \mid \sSwap{x_1}{x_2} \mid \sMemSwap{x_1}{x_2} \mid \sIf{x}{s}
\end{align*}
\paragraph{Types}
The types \type{} are unit, unsigned integers, Booleans, products, and pointers.

\paragraph{Values}
The language distinguishes syntactic values $\vValue$, which are variables, unit, pairs of variables, numeric, Boolean, and pointer literals. Numeric literals $n$ are elements of a set \textsf{UInt} of unsigned integers in the range $[0, 2^k - 1]$ and pointer literals $p$ are elements of a set \textsf{Addr} of addresses in the range $[1, 2^k - 1]$, where the parameter $k$ is the word size of the system.

\paragraph{Expressions}
Expressions $\eExp$ are values, projections, and unary and binary operators. These operators are standard logical and arithmetic operators on Booleans and integers. The operation $\oTest$ tests whether an integer is equal to zero or a pointer is equal to null.

\paragraph{Statements}
Statements $\sStmt$ include $\sSkip$, which has no effect, and composition $\sSeq{\sStmt_1}{\sStmt_2}$ of statements. The assignment statement $\sBind{x}{\eExp}$ initializes a variable $x$ and evaluates $\eExp$, storing the result in $x$. The un-assignment statement $\sUnbind{x}{\eExp}$ is unique to reversible programming --- it evaluates $\eExp$ in reverse to restore $x$ back to its default state, and discards $x$. The swap statement $\sSwap{x_1}{x_2}$ exchanges the contents of variables $x_1$ and $x_2$. The swap-with-memory statement $\sMemSwap{x_1}{x_2}$ exchanges the contents of memory at address $x_1$ with the contents of $x_2$. Finally, the statement $\sIf{x}{s}$ has the effect of $s$ whenever the condition $x$ is true.

\subsection{Type System}
\begin{figure}
\resizebox{\textwidth}{!}{
\parbox{1.1\textwidth}{
\begin{mathpar}
\inferrule[TV-Var]{\vphantom{\ctx}}{\hastype{\ctx, x : \type}{x}{\type}}

\inferrule[TV-Pair]{\hastype{\ctx}{x_1}{\type_1} \\ \hastype{\ctx}{x_2}{\type_2}}{\hastype{\ctx}{\ePair{x_1}{x_2}}{\tPair{\type_1}{\type_2}}}

\inferrule[TV-Num]{\vphantom{\ctx}}{\hastype{\ctx}{\vNum{n}}{\tUInt}}

\inferrule[TV-Bool]{b \in \{\vTrue, \vFalse\}}{\hastype{\ctx}{b}{\tBool}}

\inferrule[TV-Null]{\vphantom{\ctx}}{\hastype{\ctx}{\vNull{\type}}{\tPtr{\type}}}

\inferrule[TV-Ptr]{\vphantom{\ctx}}{\hastype{\ctx}{\vPtr{\type}{p}}{\tPtr{\type}}}

\inferrule[TE-Proj]{\hastype{\ctx}{x}{\tPair{\type_1}{\type_2}}}{\hastype{\ctx}{\eProj{i}{x}}{\type_i}}

\inferrule[TE-Not]{\hastype{\ctx}{x}{\tBool}}{\hastype{\ctx}{\eUnop{\oNot}{x}}{\tBool}}

\inferrule[TE-Test]{\hastype{\ctx}{x}{\type} \\\\ \type \in \{\tUInt, \tPtr{\type'}\}}{\hastype{\ctx}{\eUnop{\oTest}{x}}{\tBool}}

\inferrule[TE-Lop]{\hastype{\ctx}{x_1}{\tBool} \quad \hastype{\ctx}{x_2}{\tBool} \\\\ bop \in \{\oAnd, \oOr\}}{\hastype{\ctx}{\eBinop{x_1}{bop}{x_2}}{\tBool}}

\inferrule[TE-Aop]{\hastype{\ctx}{x_1}{\tUInt} \quad \hastype{\ctx}{x_2}{\tUInt} \\\\ bop \in \{\oAdd, \oSub, \oMul\}}{\hastype{\ctx}{\eBinop{x_1}{bop}{x_2}}{\tUInt}}
\end{mathpar}
}}
\setlength{\abovecaptionskip}{5pt}
\caption{Selected typing rules in \CoreLang{}. The full definition is presented in~\Cref{sec:full-semantics}.} \label{fig:core-types}
\end{figure}
\begin{figure}
\resizebox{\textwidth}{!}{
\parbox{1.1\textwidth}{
\begin{mathpar}
\inferrule[S-Skip]{\vphantom{\ctx}}{\stmtok{\ctx}{\sSkip}{\ctx}}

\inferrule[S-Seq]{\stmtok{\ctx}{\sStmt_1}{\ctx'} \\ \stmtok{\ctx'}{\sStmt_2}{\ctx''}}{\stmtok{\ctx}{\sSeq{\sStmt_1}{\sStmt_2}}{\ctx''}}

\inferrule[S-Assign]{\hastype{\ctx}{\eExp}{\type} \\ x \notin \ctx}{\stmtok{\ctx}{\sBind{x}{e}}{\ctx, x : \type}}

\inferrule[S-UnAssign]{\hastype{\ctx}{\eExp}{\type} \\ x \notin \ctx}{\stmtok{\ctx, x : \type}{\sUnbind{x}{e}}{\ctx}}

\inferrule[S-Swap]{\hastype{\ctx}{x_1}{\type} \\ \hastype{\ctx}{x_2}{\type}}{\stmtok{\ctx}{\sSwap{x_1}{x_2}}{\ctx}}

\inferrule[S-MemSwap]{\hastype{\ctx}{x_1}{\tPtr{\type}} \\ \hastype{\ctx}{x_2}{\type}}{\stmtok{\ctx}{\sMemSwap{x_1}{x_2}}{\ctx}}

\inferrule[S-If]{\stmtok{\ctx}{\sStmt}{\ctx} \\ \hastype{\ctx}{x}{\tBool} \\ x \notin \modified{\sStmt}}{\stmtok{\ctx}{\sIf{x}{s}}{\ctx}}
\end{mathpar}
}}
\setlength{\abovecaptionskip}{5pt}
\caption{Well-formation rules for statements in \CoreLang{}.} \label{fig:core-stmt-ok}
\end{figure}
The type system of \CoreLang{} assigns a type to a value or expression and determines whether a statement is well-formed and corresponds to a valid quantum program.

\paragraph{Typing}
In~\Cref{fig:core-types}, we define the typing judgments for values and expressions. A context $\ctx$ is a mapping from variables $x$ to types $\type$. The judgment $\hastype{\ctx}{\vValue}{\type}$ states that under context $\ctx$, the value $\vValue$ has type $\type$.
The judgment $\hastype{\ctx}{\eExp}{\type}$ states that under $\ctx$, the expression $\eExp$ has type $\type$.

In~\Cref{fig:core-stmt-ok}, we define the judgment $\stmtok{\ctx}{\sStmt}{\ctx'}$, stating that under $\ctx$, statement $\sStmt$ is well-formed and yields new context $\ctx'$. The statement $\sSkip$ has no effect, and sequencing of $\sStmt_1$ and $\sStmt_2$ composes their effect on the context. Assignment $\sBind{x}{\eExp}$ is well-defined when $\ctx$ does not contain $x$, and produces a context giving $x$ the type of $\eExp$. Un-assignment is the reverse, removing $x$ from the context. A swap requires its arguments to have the same type, and a swap with memory requires the first argument to be a pointer to the type of the second.

An \texttt{if}-statement requires its branch $\sStmt$ to have no net effect on the context. For reversibility, it requires the condition to be a Boolean that is not modified by $\sStmt$, as defined by the judgement:
\begin{alignat*}{5}
\modified{\sSkip} &= \emptyset \quad & \modified{\sSwap{x_1}{x_2}} &= \{x_1, x_2\} \\
\modified{\sSeq{s_1}{s_2}} &= \modified{s_1} \cup \modified{s_2} \quad & \modified{\sMemSwap{x_1}{x_2}} &= \{x_2\} \\
\modified{\sBind{x}{e}} &= \modified{\sUnbind{x}{e}} = \{x\} \quad & \modified{\sIf{x}{s}} &= \modified{s}
\end{alignat*}
\subsection{Reversible Operational Semantics}

The operational semantics of \CoreLang{} describes reversible operations that may be lifted to act in superposition over a quantum state. The semantics specifies how the machine state evolves during program evaluation while preserving sufficient information to enable reversibility.

\paragraph{Machine State}
The semantics transitions over \emph{machine states} $\mstate{\sStmt, \reg, \mem}$. Machine states consist of a program $\sStmt$, a \emph{register file} $\reg$ mapping variables to values, and a \emph{memory} $\mem$ mapping addresses to values. The register file $\reg$ corresponds to the main quantum registers over which we may perform arbitrary gates, while the memory $\mem$ corresponds to the qRAM. We use $\regtypes{\reg}$ to denote the typing context formed by mapping each variable in $\reg$ to the type of the value stored.

\paragraph{Default Value}
The default value of a type $\type$, denoted $\default{\type}$, is the initial value of a variable of type $\type$ and corresponds to a convenient memory representation of constant zeroes of appropriate size:
\begin{alignat*}{5}
\default{\tUnit} &= \vUnit \quad & \default{\tBool} &= \vFalse \quad & \default{\tPtr{\type}} &= \vNull{\type} \\
\default{\tUInt} &= \vNum{0} \quad & \default{\tPair{\type_1}{\type_2}} &= \ePair{\default{\type_1}}{\default{\type_2}} &
\end{alignat*}
The number of bits required to represent a type is the same across a superposition of values of that type, and can be determined statically. When $\type_1$ and $\type_2$ are represented using the same number of bits, their default values $\default{\type_1}$ and $\default{\type_2}$ are physically identical. For example, $\vNum{0}$ and $\vNull{\type}$ are identical, assuming that integers and pointers are represented using the same number of bits.

\paragraph{Forward Evaluation}
In~\Cref{fig:core-exp-eval}, we define the forward expression evaluation judgment \linebreak$\estep{\eExp}{\reg}{x}{z}{\vValue}$. This judgment states that expression $\eExp$ evaluates to value $\vValue$, which is stored in register $x$, whose initial state was $z$. This judgment maintains reversibility by ensuring that 1) $\eExp$ is not destroyed by evaluation, 2) $\vValue$ is stored into a particular location $x$, and 3) the initial state $z$ of variable $x$ is a constant that remains known after this operation.

\begin{figure}
\resizebox{\textwidth}{!}{
\parbox{1.1\textwidth}{
\begin{mathpar}
\inferrule[SE-Var]{\get{\reg}{x'} = \vValue \\ \hastype{\cdot}{\vValue}{\type}}{\estep{x'}{\reg}{x}{\default{\type}}{\vValue}}

\inferrule[SE-Pair]{\get{\reg}{x_1} = \vValue_1 \\ \get{\reg}{x_2} = \vValue_2 \\ \hastype{\cdot}{\vValue_1}{\type_1} \\ \hastype{\cdot}{\vValue_2}{\type_2}}{\estep{\ePair{x_1}{x_2}}{\reg}{x}{\ePair{\default{\type_1}}{\default{\type_2}}}{\ePair{\vValue_1}{\vValue_2}}}

\inferrule[SE-Val]{v \neq x' \\ v \neq (x_1, x_2) \\ \hastype{\cdot}{\vValue}{\type}}{\estep{\vValue}{\reg}{x}{\default{\type}}{\vValue}}

\inferrule[SE-Proj]{\get{\reg}{x'} = \ePair{\vValue_1}{\vValue_2} \\ \hastype{\cdot}{\vValue_i}{\type}}{\estepline{\eProj{i}{x'}}{\reg}{x}{\default{\type}}{\vValue_i}}

\inferrule[SE-Not]{\get{\reg}{x'} = b \\ b \in \{\vTrue, \vFalse\}}{\estepline{\eUnop{\oNot}{x'}}{\reg}{x}{\vFalse}{\neg b}}

\inferrule[SE-Lop]{\get{\reg}{x_1} = b_1 \\ \get{\reg}{x_2} = b_2 \\\\ bop \in \{\oAnd, \oOr\} \\ b_1, b_2 \in \{\vTrue, \vFalse\}}{\estepline{\eBinop{x_1}{bop}{x_2}}{\reg}{x}{\vFalse}{b_1\ bop\ b_2}}

\inferrule[SE-Aop]{\get{\reg}{x_1} = \vNum{n_1} \\ \get{\reg}{x_2} = \vNum{n_2} \\ bop \in \{\oAdd, \oSub, \oMul\}}{\estepline{\eBinop{x_1}{bop}{x_2}}{\reg}{x}{\vNum{0}}{\vNum{n_1\ bop\ n_2}}}
\end{mathpar}
}}
\caption{Selected forward step rules of expressions in \CoreLang{}. The full definition is presented in~\Cref{sec:full-semantics}.} \label{fig:core-exp-eval}
\end{figure}
\begin{figure}
\resizebox{\textwidth}{!}{
\parbox{1.1\textwidth}{
\begin{mathpar}
\inferrule[SS-SeqStep]{\sstep{\sStmt_1}{\reg}{\mem}{\sStmt'_1}{\reg'}{\mem'}}{\sstep{\sSeq{\sStmt_1}{\sStmt_2}}{\reg}{\mem}{\sSeq{\sStmt'_1}{\sStmt_2}}{\reg'}{\mem'}}

\inferrule[SS-Assign]{\hastype{\regtypes{\reg}}{\eExp}{\type} \\ \estep{\eExp}{\reg}{x}{\default{\type}}{\vValue}}{\sstep{\sBind{x}{\eExp}}{\reg}{\mem}{\sSkip}{\subst{\reg}{x}{\vValue}}{\mem}}

\inferrule[SS-UnAssign]{\hastype{\regtypes{\reg}}{\eExp}{\type} \\ \restep{\eExp}{\reg}{x}{\default{\type}}{\vValue}}{\sstep{\sUnbind{x}{\eExp}}{\subst{\reg}{x}{\vValue}}{\mem}{\sSkip}{\reg}{\mem}}

\inferrule[SS-Swap]{\vphantom{\ctx}}{\sstepline{\sSwap{x_1}{x_2}}{\subst{\reg}{x_1, x_2}{\vValue_1, \vValue_2}}{\mem}{\sSkip}{\subst{\reg}{x_1, x_2}{\vValue_2, \vValue_1}}{\mem}}

\inferrule[SS-MemSwapNull]{\get{\reg}{x_1} = \vNull{\type}}{\sstep{\sMemSwap{x_1}{x_2}}{\reg}{\mem}{\sSkip}{\reg}{\mem}}

\inferrule[SS-MemSwapPtr]{\get{\reg}{x_1} = \vPtr{\type}{p}}{\sstepline{\sMemSwap{x_1}{x_2}}{\subst{\reg}{x_2}{\vValue}}{\subst{\mem}{p}{\vValue'}}{\sSkip}{\subst{\reg}{x_2}{\vValue'}}{\subst{\mem}{p}{\vValue}}}
\end{mathpar}
}}
\caption{Selected forward step rules of statements in \CoreLang{}. The full definition is presented in~\Cref{sec:full-semantics}.} \label{fig:core-stmt-step}
\end{figure}

The first evaluation rule specifies that evaluating a variable $x'$ copies its value into a destination variable $x$ that starts in default state.
The following three rules similarly specify copying a known constant $\vValue$, or a pair of values, or one field of a pair, into $x$.
The rules for unary and binary operators update $x$ based on a criterion computed from the arguments, and the rule for arithmetic operators updates $x$ to an arithmetic result over the arguments.

\paragraph{Reverse Evaluation}
In~\Cref{sec:full-semantics}, we define the reverse expression evaluation judgment \linebreak$\restep{\eExp}{\reg}{x}{z}{\vValue}$. This judgment states that given the result $\vValue$ of evaluating $\eExp$, stored in variable $x$, we may reverse the evaluation of $\eExp$ and restore $x$ back to its initial state $z$.
All rules defining this judgment are reverses of the rules defining the forward evaluation.

\paragraph{Forward Step}
In~\Cref{fig:core-stmt-step}, we define the forward statement step judgment $\sstep{\sStmt}{\reg}{\mem}{\sStmt'}{\reg'}{\mem'}$. This judgment states that given machine state $\mstate{\reg, \mem}$, executing statement $\sStmt$ results in new machine state $\mstate{\reg', \mem'}$ and new statement $\sStmt'$ to be executed next.

The first two rules skip the $\sSkip$ statement and execute sequences left to right. The rule $\sBind{x}{e}$ first allocates a variable $x$ in default state, and then forward evaluates $\eExp$, storing the result $\vValue$ into $x$. The rule for un-assignment $\sUnbind{x}{e}$ reverses the evaluation of $\eExp$ to restore the state of $x$ from $\vValue$ back to the default state, and then deallocates $x$. The rule for $\sSwap{x_1}{x_2}$ exchanges the values of $x_1$ and $x_2$ in $\reg$. The rule for swapping with memory at a null pointer has no effect. The rule to swap $x_2$ with memory at a valid pointer $p$ swaps the value of $x_2$ in $\reg$ with the contents of $\mem$ at address $p$. Finally, the two rules for \texttt{if} either transition to the body $\sStmt$ if the condition passes or to $\sSkip$ if it fails.

\begin{figure}
\resizebox{\textwidth}{!}{
\parbox{1.1\textwidth}{
\begin{mathpar}
\inferrule[RS-SeqStep]{\rsstep{\sStmt_2}{\reg}{\mem}{\sStmt'_2}{\reg'}{\mem'}}{\rsstep{\sSeq{\sStmt_1}{\sStmt_2}}{\reg}{\mem}{\sSeq{\sStmt_1}{\sStmt'_2}}{\reg'}{\mem'}}

\inferrule[RS-Assign]{\hastype{\regtypes{\reg}}{\eExp}{\type} \\ \restep{\eExp}{\reg}{x}{\default{\type}}{\vValue}}{\rsstep{\sBind{x}{\eExp}}{\reg}{\mem}{\sSkip}{\subst{\reg}{x}{\vValue}}{\mem}}

\inferrule[RS-UnAssign]{\hastype{\regtypes{\reg}}{\eExp}{\type} \\ \estep{\eExp}{\reg}{x}{\default{\type}}{\vValue}}{\rsstep{\sUnbind{x}{\eExp}}{\subst{\reg}{x}{\vValue}}{\mem}{\sSkip}{\reg}{\mem}}

\inferrule[RS-Swap]{\vphantom{\ctx}}{\rsstepline{\sSwap{x_1}{x_2}}{\subst{\reg}{x_1, x_2}{\vValue_1, \vValue_2}}{\mem}{\sSkip}{\subst{\reg}{x_1, x_2}{\vValue_2, \vValue_1}}{\mem}}
\end{mathpar}
}}
\caption{Selected reverse step rules of statements in \CoreLang{}. The full definition is provided in~\Cref{sec:full-semantics}.} \label{fig:core-stmt-reverse}
\end{figure}

\paragraph{Reverse Step}
In~\Cref{fig:core-stmt-reverse}, we define the reverse statement step judgment $\rsstep{\sStmt}{\reg}{\mem}{\sStmt'}{\reg'}{\mem'}$. This judgment states that given machine state $\mstate{\reg', \mem'}$, executing statement $\sStmt$ in reverse results in new machine state $\mstate{\reg, \mem}$ and new statement $\sStmt'$ to be executed next.

Its rules are the reverse of the forward step. Sequential statements execute from right to left. The roles of $\sBind{x}{e}$ and $\sUnbind{x}{e}$ are flipped, with the latter now evaluating $e$ forward and the former evaluating backward. Because the reverse of a swap operation is itself, the rules for swap statements are the same. Executing an \texttt{if} in reverse requires reversing its body.

\subsection{Soundness}

We now show that \CoreLang{} satisfies key soundness properties in both execution directions:

\begin{definition}[Valid Machine State]
    A machine state $\mstate{\sStmt, \reg, \mem}$ is \emph{valid} under initial context $\ctx$ and final context $\ctx'$, denoted $\ctx \vdash \mstate{\sStmt, \reg, \mem} \dashv \ctx'$, if and only if $\regtypes{\reg} = \ctx$ and $\stmtok{\ctx}{\sStmt}{\ctx'}$.
\end{definition}

\begin{theorem}[Preservation]
    If $\ctx \vdash \mstate{\sStmt, \reg, \mem} \dashv \ctx'$ and $\sstep{\sStmt}{\reg}{\mem}{\sStmt'}{\reg'}{\mem'}$, then $\ctx' \vdash \mstate{\sStmt', \reg', \mem'} \dashv \ctx''$ for some context $\ctx''$.
\end{theorem}

\begin{proof}
    By induction over the forward operational semantics of \CoreLang{}.
\end{proof}

A valid machine state $\mstate{\sStmt, \reg, \mem}$ fails to make forward progress only if it must execute statement $\sUnbind{x}{e}$ where $\eExp$ does not compute the value of $x$, i.e. $\reg = \subst{\reg'}{x}{\vValue}$ and $\restep{\eExp}{\reg'}{x}{\default{\type}}{\vValue}$ fails to hold. We denote this condition $\sstepstuck{\sStmt}{\reg}{\mem}$ and define it in~\Cref{sec:full-semantics}.

\begin{theorem}[Progress]
    Given a machine state $\mstate{\sStmt, \reg, \mem}$ that is valid under some initial and final contexts, either $\sStmt$ is $\sSkip$ or $\sstepstuck{\sStmt}{\reg}{\mem}$ or $\sstep{\sStmt}{\reg}{\mem}{\sStmt'}{\reg'}{\mem'}$ for some $\sStmt', \reg', \mem'$.
\end{theorem}

\begin{proof}
    By induction over the definition of valid machine states.
\end{proof}

\begin{theorem}[Reverse Preservation]
If $\ctx' \vdash \mstate{\sStmt, \reg', \mem'} \dashv \ctx''$ and $\rsstep{\sStmt}{\reg}{\mem}{\sStmt'}{\reg'}{\mem'}$, then $\ctx \vdash \mstate{\sStmt', \reg, \mem} \dashv \ctx'$ for some context $\ctx$.
\end{theorem}

\begin{proof}
By induction over the reverse operational semantics of \CoreLang{}.
\end{proof}

A valid machine state $\mstate{\sStmt, \reg', \mem'}$ fails to make reverse progress only if it must execute statement $\sBind{x}{e}$ where $\eExp$ does not compute the value of $x$, defined analogously to the forward case. We denote this condition $\rsstepstuck{\sStmt}{\reg'}{\mem'}$ and define it in~\Cref{sec:full-semantics}.

\begin{theorem}[Reverse Progress]
Given a state $\mstate{\sStmt, \reg', \mem'}$ that is valid under some initial and final contexts, either $\sStmt$ is $\sSkip$ or $\rsstepstuck{\sStmt}{\reg'}{\mem'}$ or $\rsstep{\sStmt}{\reg}{\mem}{\sStmt'}{\reg'}{\mem'}$ for some $\sStmt', \reg, \mem$.
\end{theorem}

\begin{proof}
By induction over the definition of valid machine states.
\end{proof}

\subsection{Program Reversibility and Inversion} \label{sec:program-inversion}

Let the notation $\mapsto^*$ denote the reflexive transitive closure of $\mapsto$, and similarly $\mapsto^*_R$ for $\mapsto_R$.
We say that a program $\sStmt$ \emph{terminates} in the forward direction when, given a register file $\reg$ and memory $\mem$, we have $\sstepstar{\sStmt}{\reg}{\mem}{\sSkip}{\reg'}{\mem'}$ for some final $\reg'$ and $\mem'$.
We say that $\sStmt$ terminates in the reverse direction when, given $\reg'$ and $\mem'$, we have $\rsstepstar{\sStmt}{\reg}{\mem}{\sSkip}{\reg'}{\mem'}$ for some $\reg$ and $\mem$.

The following theorem states that every \CoreLang{} program is \emph{reversible}, meaning that it terminates in the forward direction if and only if it does so in the reverse direction.
\begin{theorem}[Reversibility]
    $\sstepstar{\sStmt}{\reg}{\mem}{\sSkip}{\reg'}{\mem'}$ iff $\rsstepstar{\sStmt}{\reg}{\mem}{\sSkip}{\reg'}{\mem'}$.
\end{theorem}

\begin{proof}
Follows from forward and reverse progress and preservation.
\end{proof}
The type system does not statically enforce that a program terminates in the forward direction, and it is the responsibility of the program to correctly uncompute temporary values.

\paragraph{Inversion}

Given a program $\sStmt$, its \emph{inversion} $\reverse{\sStmt}$ is a program whose semantics are reversed:
{\small
\begin{alignat*}{5}
\reverse{\sSkip} &= \sSkip \quad & \reverse{\sBind{x}{e}} &= \sUnbind{x}{e} \quad  & \reverse{\sSwap{x_1}{x_2}} &= \sSwap{x_1}{x_2} \\
\reverse{\sSeq{\sStmt_1}{\sStmt_2}} &= \sSeq{\reverse{\sStmt_2}}{\reverse{\sStmt_1}} \quad & \reverse{\sUnbind{x}{e}} &= \sBind{x}{e} \quad & \reverse{\sMemSwap{x_1}{x_2}} &= \sMemSwap{x_1}{x_2} \\
& & & & \reverse{\sIf{x}{\sStmt}} &= \sIf{x}{\reverse{\sStmt}}
\end{alignat*}}
The following theorem states that the forward semantics of a program is equivalent to the reverse semantics of its syntactic inversion.

\begin{theorem}[Invertibility]
    $\sstepstar{\sStmt}{\reg}{\mem}{\sSkip}{\reg'}{\mem'}$ iff $\rsstepstar{\reverse{\sStmt}}{\reg}{\mem}{\sSkip}{\reg'}{\mem'}$.
\end{theorem}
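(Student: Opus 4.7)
My plan is to reduce the multi-step invertibility to a one-step correspondence between the forward and reverse relations, then lift by induction on derivation length. I would first establish an auxiliary expression-level invertibility lemma: $\estep{e}{R}{x}{z}{v}$ holds iff $\restep{e}{R}{x}{z}{v}$ holds. This should follow immediately by inspection of the $\Uparrow_x$ rules in~\Cref{sec:full-semantics}, which the paper defines as syntactic mirrors of the $\Downarrow_x$ rules, and it will let me freely interchange forward and reverse expression evaluation when matching statement-level premises across the two relations.

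Next, I would prove the key single-step correspondence: a forward step of $\sStmt$ with residual $\sStmt'$ is derivable iff the matching reverse step of $\reverse{\sStmt}$ with residual $\reverse{\sStmt'}$ is derivable between the same pair of machine states. I would do this by case analysis on the step rule. For the $\sBind{x}{e}$ and $\sUnbind{x}{e}$ cases I would use $\reverse{\sBind{x}{e}} = \sUnbind{x}{e}$ together with the mirrored rule pairs SS-Assign/RS-UnAssign and SS-UnAssign/RS-Assign, discharged via the expression-level lemma. Swap and memory-swap are self-inverse both syntactically and semantically, so the forward and reverse rules coincide and the case is immediate. For the \texttt{if}-case I would exploit $\reverse{\sIf{x}{\sStmt}} = \sIf{x}{\reverse{\sStmt}}$ together with S-If's requirement that $x \notin \modified{\sStmt}$, which ensures the condition's truth value is stable in either direction. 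The hardest case will be sequences: SS-SeqStep advances the leftmost sub-statement while RS-SeqStep advances the rightmost, but $\reverse{\sSeq{\sStmt_1}{\sStmt_2}} = \sSeq{\reverse{\sStmt_2}}{\reverse{\sStmt_1}}$ flips the order of the sequence, so advancing $\sStmt_1$ on the forward side lines up precisely with advancing $\reverse{\sStmt_1}$ as the right component of the inverted sequence, and the remaining payload can be discharged by the inductive hypothesis.

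With the single-step lemma in place, I would lift to the theorem by induction on the length of the forward derivation. The zero-step base case uses $\reverse{\sSkip} = \sSkip$ and that both machine states coincide. For the inductive step on a derivation that first takes $\sStmt$ to $\sStmt''$ with intermediate state $(\reg'', \mem'')$ and then reaches $\sSkip$ in fewer steps, I would apply the single-step lemma to convert the initial step into a reverse step of $\reverse{\sStmt}$ to $\reverse{\sStmt''}$, then invoke the inductive hypothesis on the remainder to obtain the reverse derivation from $\reverse{\sStmt''}$, and compose the two. The converse implication follows symmetrically by instantiating the same lemmas at $\reverse{\sStmt}$ and invoking $\reverse{\reverse{\sStmt}} = \sStmt$, which is a straightforward structural induction on statement syntax. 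The main obstacle I anticipate is reconciling the opposite orderings of forward and reverse sequential composition within the one-step lemma; the definition of $\reverse{\cdot}$ on sequences is engineered precisely to absorb this mismatch, so once the sequence case is carefully stated, the remainder of the argument should be routine bookkeeping.
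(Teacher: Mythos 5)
Your proposal is correct and follows essentially the same route as the paper, whose entire proof is ``by induction over the forward operational semantics''; your decomposition into an expression-level mirror lemma, a single-step correspondence between $\mapsto$ on $\sStmt$ and $\mapsto_R$ on $\reverse{\sStmt}$, and a lift by induction on derivation length is just that induction carried out explicitly. You also correctly isolate the one non-trivial point: $\reverse{\cdot}$ reversing the order of sequential composition is precisely what aligns \rulename{SS-SeqStep}, which advances the left component, with \rulename{RS-SeqStep}, which advances the right component, so the residual of the reverse run of $\reverse{\sStmt}$ stays equal to $\reverse{\cdot}$ of the residual of the forward run of $\sStmt$ throughout.
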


\begin{proof}
    By induction over the definition of inversion.
\end{proof}

\subsection{Quantum Circuit Semantics} \label{sec:circuit-semantics}

We now develop a semantics of \CoreLang{} that translates a program $\sStmt$ to a unitary quantum circuit in order to execute $\sStmt$ on a quantum computer.

\paragraph{Expression Semantics}
Given an expression $\eExp$ whose forward operational semantics is defined as $\estep{\eExp}{\reg}{x}{\default{\type}}{\vValue}$, we lift it to a unitary gate $U_\eExp$ that operates on register file $\reg$ and a new register $x$ and stores $\vValue$ in this register, that is, $U_\eExp\ket{\reg, \default{\type}} = \ket{\reg, \vValue}$. Each possible expression -- production of constant value, variable reference, projection from a pair, integer arithmetic, and Boolean logic -- can be implemented as a unitary gate manipulating $x$:

\begin{itemize}
    \item A constant value $v$ can be reversibly produced by applying the NOT gate on each bit of $x$ whose corresponding bit is set in the bit representation of $v$.
    \item A variable $x'$ may be reversibly copied into $x$ by applying the controlled-NOT gate on each bit in $x$ conditioned on the corresponding bit of $x'$.
    \item Projection is analogous to copying a variable, except only one field of the pair is copied.
    \item Integer arithmetic may be reversibly performed through circuit designs such as~\citet{draper2000addition,cheng2002,islam2009low}.
    \item Boolean logic may be embedded reversibly into NOT and Toffoli (controlled-controlled-NOT) gates, as established by \citet{Fredkin1982ConservativeL}.
\end{itemize}

\paragraph{Statement Semantics}

\newsavebox{\slashwire}
\savebox{\slashwire}{
\begin{quantikz}
    & \qwbundle{}
\end{quantikz}
}
\newcommand{\varwire}{\smash{\hspace{-0.6em}\raisebox{0.3em}{$\oset{\hspace{1.7em}\tiny k}{\usebox{\slashwire}}$}\hspace{-0.1em}}}

\newsavebox{\tripwire}
\savebox{\tripwire}{
\begin{quantikz}
    & \qwbundle[alternate]{}
\end{quantikz}
}
\newcommand{\varswire}{\hspace{-0.6em}\raisebox{0.3em}{\usebox{\tripwire}}\hspace{-0.1em}}

\newsavebox{\skipcircuit}
\savebox{\skipcircuit}{
\begin{quantikz}
    \lstick{$\ket{\reg, \mem}$} & \rstick{$\ket{\reg, \mem}$} \qwbundle[alternate]{}
\end{quantikz}
}
\newsavebox{\assigncircuit}
\savebox{\assigncircuit}{
\begin{quantikz}
    \lstick{$\ket{\reg}$} & \gate[wires=2]{U_{\eExp}}\qwbundle[alternate]{} & \rstick{$\ket{\reg}$} \qwbundle[alternate]{} \\
    \lstick{$\ket{0}^{\otimes k}$} & \qwbundle{\hspace{-1em}k} & \rstick{$\ket{x}$} \qwbundle{k}
\end{quantikz}
}
\newsavebox{\unassigncircuit}
\savebox{\unassigncircuit}{
\begin{quantikz}
    \lstick{$\ket{\reg}$} & \gate[wires=2]{U_{\eExp}^\dagger}\qwbundle[alternate]{} & \rstick{$\ket{\reg}$} \qwbundle[alternate]{} \\
    \lstick{$\ket{x}$} & \qwbundle{\hspace{-1em}k} & \rstick{$\ket{0}^{\otimes k}$} \qwbundle{k}\end{quantikz}
}
\newsavebox{\seqcircuit}
\savebox{\seqcircuit}{
\begin{quantikz}
    \lstick{$\ket{\reg, \mem}$} & \gate{\circuit{\sStmt_1}}\qwbundle[alternate]{} & \gate{\circuit{\sStmt_2}}\qwbundle[alternate]{} & \rstick{$\ket{\reg', \mem'}$} \qwbundle[alternate]{}
\end{quantikz}
}
\newsavebox{\swapcircuit}
\savebox{\swapcircuit}{
\begin{quantikz}
    \lstick{$\ket{x_1}$} & \swap{1}\qwbundle{\hspace{-1em}k} & \rstick{$\ket{x_1}$} \qwbundle{k} \\
    \lstick{$\ket{x_2}$} & \targX{}\qwbundle{\hspace{-1em}k} & \rstick{$\ket{x_2}$} \qwbundle{k}
\end{quantikz}
}
\newsavebox{\memswapcircuit}
\savebox{\memswapcircuit}{
\begin{quantikz}
    \lstick{$\ket{x_1}$} & \gate[wires=3]{\textrm{qRAM}}\qwbundle{\hspace{-1em}k_1} & \rstick{$\ket{x_1}$} \qwbundle{k_1} \\
    \lstick{$\ket{x_2}$} & \qwbundle{\hspace{-1em}k_2} & \rstick{$\ket{x'_2}$} \qwbundle{k_2} \\
    \lstick{$\ket{\mem}$} & \qwbundle[alternate]{} & \rstick{$\ket{\mem'}$} \qwbundle[alternate]{}
\end{quantikz}
}
\newsavebox{\ifcircuit}
\savebox{\ifcircuit}{
\begin{quantikz}
    \lstick{$\ket{x}$} & \ctrl{1} & \rstick{$\ket{x}$} \qw \\
    \lstick{$\ket{\reg, \mem}$} & \gate{\circuit{s}}\qwbundle[alternate]{} & \rstick{$\ket{\reg', \mem'}$} \qwbundle[alternate]{}
\end{quantikz}
}
\begin{figure}
\captionsetup[subfigure]{labelformat=empty}
\captionsetup{justification=centering}
\centering
\begin{subfigure}[b]{.3\textwidth}
\centering
\makebox(100,20){\usebox{\skipcircuit}}
\caption{$\circuit{\sSkip}$ \\ (identity circuit)}
\end{subfigure}
\begin{subfigure}[b]{.6\textwidth}
\centering
\makebox(100,20){\usebox{\seqcircuit}}
\caption{$\circuit{\sSeq{\sStmt_1}{\sStmt_2}}$ \\ (concatenation)}
\end{subfigure}
\begin{subfigure}[b]{.4\textwidth}
\centering
\makebox(100,60){\usebox{\assigncircuit}}
\caption{$\circuit{\sBind{x}{\eExp}}$ \\ (execute oracle)}
\end{subfigure}
\begin{subfigure}[b]{.4\textwidth}
\centering
\makebox(100,60){\usebox{\unassigncircuit}}
\caption{$\circuit{\sUnbind{x}{\eExp}}$ \\ (un-execute oracle)}
\end{subfigure}
\begin{subfigure}[b]{.3\textwidth}
\centering
\makebox(100,75){\usebox{\swapcircuit}}
\caption{$\circuit{\sSwap{x_1}{x_2}}$ \\ (swap)}
\end{subfigure}
\begin{subfigure}[b]{.3\textwidth}
\centering
\makebox(100,75){\usebox{\memswapcircuit}}
\caption{$\circuit{\sMemSwap{x_1}{x_2}}$ \\ (random access)}
\end{subfigure}
\begin{subfigure}[b]{.3\textwidth}
\centering
\makebox(100,75){\usebox{\ifcircuit}}
\caption{$\circuit{\sIf{x}{s}}$ \\ (conditional gate)}
\end{subfigure}
\caption{Definition of quantum circuit semantics of \CoreLang{}.} \label{fig:core-circuit}
\end{figure}

In~\Cref{fig:core-circuit}, we translate a \CoreLang{} statement $\sStmt$ to a circuit fragment $\circuit{\sStmt}$ operating on the register file $\reg$ and memory $\mem$ encoded as quantum states. In the circuits, a wire depicted as \varwire{} denotes a $k$-bit register representing an individual program value. A wire depicted as \varswire{} denotes a collection of such values, such as $\reg$ and $\mem$. A circuit fragment may be expanded to operate on the entire program state by tensor product with the identity gate.

The $\sSkip$ statement translates to the identity circuit that maps a quantum state containing $\reg$ and $\mem$ to itself. Statements $\sStmt_1$ and $\sStmt_2$ are sequenced by concatenating their corresponding circuits $\circuit{\sStmt_1}$ and $\circuit{\sStmt_2}$. An expression $\eExp$ is assigned to $x$ by initializing a zero register, executing $U_\eExp$ on $\ket{R}$ and this register, and naming the resulting register $x$. An expression is un-assigned by executing $U_\eExp^\dagger$, the inverse of $U_\eExp$, producing a register of zeroes that is discarded from the circuit. Variables $x_1$ and $x_2$ are swapped using swap gates across their entire bit register representations. A swap with memory is performed by invoking the quantum random access gate (\Cref{sec:background}) on the address $x_1$, value $x_2$, and memory $\mem$. For an \texttt{if}-statement, its body $\circuit{\sStmt}$ is executed, conditioned on the bit $x$.

\paragraph{Soundness}
The circuit semantics defines a unitary operator $\circuit{\sStmt} : \ket{\reg, \mem} \mapsto \ket{\reg', \mem'}$ for each statement $\sStmt$. It faithfully represents the operational semantics of the statement in \CoreLang{}:
\begin{theorem}[Semantics Equivalence]
    $\circuit{\sStmt}\ket{\reg, \mem} = \ket{\reg', \mem'}$ iff $\sstepstar{\sStmt}{\reg}{\mem}{\sSkip}{\reg'}{\mem'}$
\end{theorem}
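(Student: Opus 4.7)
The plan is to proceed by structural induction on the statement $\sStmt$, using the fact that both the circuit semantics $\circuit{\cdot}$ and the operational semantics $\mapsto$ are deterministic: once the forward direction is established, the reverse direction follows because both relations uniquely determine $\ket{\reg', \mem'}$ from $\ket{\reg, \mem}$ and $\sStmt$. Before the main induction, I would isolate an auxiliary lemma for expressions, asserting that for every well-typed $\eExp$,
\[
U_\eExp\ket{\reg, \default{\type}} = \ket{\reg, \vValue} \quad\text{iff}\quad \estep{\eExp}{\reg}{x}{\default{\type}}{\vValue}.
\]
This lemma is proved by structural induction on $\eExp$, matching each of the gate-level constructions (NOT gates for constants, controlled-NOT for variable copy and projection, reversible adders such as~\citet{draper2000addition} for arithmetic, Toffoli-based circuits for Boolean logic) against the corresponding rule of~\Cref{fig:core-exp-eval}. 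The dagger of this lemma handles reverse evaluation.

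With the expression lemma in hand, I would dispatch each case of the main induction using the diagram for that statement in~\Cref{fig:core-circuit}. The $\sSkip$ case is immediate from $\circuit{\sSkip} = I$ and the fact that $\sSkip$ takes no operational step. For $\sSeq{\sStmt_1}{\sStmt_2}$, the circuit is the concatenation $\circuit{\sStmt_2} \cdot \circuit{\sStmt_1}$, and the operational semantics reduces $\sStmt_1$ to $\sSkip$ before advancing to $\sStmt_2$ (rule \rulename{SS-SeqStep}); the inductive hypotheses on the two sub-statements, together with functoriality of tensor product against the untouched part of $\mstate{\reg, \mem}$, give the equivalence. The $\sBind{x}{\eExp}$ and $\sUnbind{x}{\eExp}$ cases reduce immediately to the expression lemma applied to $U_\eExp$ or $U_\eExp^\dagger$, with the initialization/discard of $x$ in the circuit matching the allocation/deallocation of $x$ in \rulename{SS-Assign}/\rulename{SS-UnAssign}. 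The $\sSwap{x_1}{x_2}$ case is a direct check against the swap gate.

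The two most subtle cases are $\sMemSwap{x_1}{x_2}$ and $\sIf{x}{\sStmt}$. For memswap, I would argue by case analysis on the value of $x_1$: when $x_1 = \vNull{\type}$, the qRAM gate leaves $\mem$ unchanged (by the convention on its action at the reserved null address) and matches \rulename{SS-MemSwapNull}; when $x_1 = \vPtr{\type}{p}$ for $p \in \textsf{Addr}$, the multi-qubit random-access gate of~\Cref{sec:background} performs exactly the exchange in \rulename{SS-MemSwapPtr}. Because the classical input $\ket{\reg, \mem}$ is a computational basis state, no superposition over addresses arises at this level and the gate acts as a bare swap between $x_2$ and $\mem[p]$. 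For the conditional, I would unfold the controlled gate on the computational basis state $\ket{x}$: when $x = \vTrue$ the circuit applies $\circuit{\sStmt}$ and the operational semantics steps into the branch, so the inductive hypothesis closes the case; when $x = \vFalse$ the controlled gate acts as identity, matching the skip branch. The side condition $x \notin \modified{\sStmt}$ from \rulename{S-If} ensures the control qubit is not disturbed by the sub-circuit $\circuit{\sStmt}$, so the control structure is well-defined.

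The reverse implication, from a circuit identity to a multi-step operational derivation, follows because $\circuit{\sStmt}$ is a unitary and hence a function on basis states, while the forward operational semantics is deterministic and total on valid machine states except at stuck un-assignments; but a stuck un-assignment would mean $U_\eExp^\dagger\ket{\reg}$ does not produce a zero register at $x$, contradicting the assumed circuit equation via the expression lemma. The main obstacle I anticipate is purely in the expression-level lemma: stating it carefully so that the framing condition (everything outside $x$ is untouched, and $x$ starts in $\default{\type}$) lines up with the tensor-product embedding used in $\circuit{\sBind{x}{\eExp}}$, and then verifying it across every arithmetic and logical construct. Once that bookkeeping is in place, each statement case of the main theorem is a one-line appeal to the inductive hypothesis plus the relevant gate identity.
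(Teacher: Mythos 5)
Your proposal is correct and takes essentially the same approach as the paper, which proves this theorem by induction on the structure of $\sStmt$; your expression-level lemma and per-case matching against~\Cref{fig:core-circuit} simply spell out the details that the paper leaves implicit.
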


\begin{proof}
    By induction on the structure of $\sStmt$.
\end{proof}

\section{\AllocatorName{}: History-Independent Quantum Memory Allocation} \label{sec:memory-allocation}

In this section, we present \AllocatorName{}, a memory allocator for quantum programs. The task of the allocator is to manage dynamic memory --- when invoked by the program, it returns a pointer to an available allocation site and updates internal metadata to mark that site as being in use.

\AllocatorName{} achieves the goals of reversibility, history independence, and bounded-time execution.
Its allocation is reversible, with the reverse being deallocation. It guarantees that allocated data structures are history-independent --- a data structure with a unique logical linking structure has a unique physical memory representation. Finally, it performs allocation in constant time.

To develop \AllocatorName{}, we present in~\Cref{sec:allocator} a core algorithm for allocation based on a \emph{free list} data structure that is reversible and constant-time. Though this algorithm without modification is not history-independent, as we demonstrate in~\Cref{sec:core-interference}, we then present \emph{symmetrization}, \AllocatorName{}'s modification to this algorithm that enables history-independent allocation, in~\Cref{sec:symmetrization}.

\subsection{Core Allocation Algorithm} \label{sec:allocator}

To illustrate the core allocation algorithm, we present an example that stores several values in dynamically allocated memory. We assume that the memory consists of a heap of $m$ words of size $k$, that integers and pointers are word-sized, and that \lstinline{null} and zero have identical bit representations.

\paragraph{Free List}
\AllocatorName{}'s allocation algorithm relies on a \emph{free list} data structure that tracks the free blocks of memory available to the program and is stored in the unallocated portion of the heap.
A pointer to the head of the free list is maintained in a designated \emph{allocation register} denoted \lstinline{alloc_reg}.

In~\Cref{fig:alloc-init}, we depict the initial state of the allocation register and heap.
Before the program executes, \AllocatorName{} constructs this initial state by initializing \lstinline{alloc_reg} to \vNull{} and the blocks of the heap to a sequence $[\vPtr{}{1}$, $\vPtr{}{2}$, $\vPtr{}{3}, \ldots,$ $\vPtr{}{m - 1}$, $\vNull{}]$.\footnotemark{}
Then, it swaps \lstinline{alloc_reg} with the first block of the heap. The effect is that at the start of program execution, the whole heap constitutes the free list, and a pointer to the first free block is stored in \lstinline{alloc_reg}.
\footnotetext{The value of the last block is defined to be \vNull{} so that further allocations return \vNull{} once memory has been exhausted.}
\input{alloc-fig}

\paragraph{Allocation}
In~\Cref{fig:alloc-nums}, we present a program that stores the values 1 and 2 in memory.
Lines~\ref{lst:alloc-nums-1} to~\ref{lst:alloc-nums-3} invoke the \AllocatorName{} allocator to obtain a pointer \lstinline{p1} to space allocated for the first value:
\begin{enumerate}
    \item \lstinline[style=color]{let p1 <- null} initializes \lstinline{p1} to \lstinline{null}, and then
    \item \lstinline[style=color]{p1 <-> alloc_reg} stores in \lstinline{p1} the address of the first available free block and stores zero in \lstinline{alloc_reg}, and then
    \item \lstinline[style=color]{*p1 <-> alloc_reg} stores into \lstinline{alloc_reg} the value at memory address \lstinline{p1}, which is the address of the next available block, and stores zero into memory at address \lstinline{p1}.
\end{enumerate}

In~\Cref{fig:alloc-1}, we depict the program state after executing line~\ref{lst:alloc-nums-3}. The allocation register now points to the new head of the free list, and \lstinline{p1} points to the newly allocated, zero-initialized word.

On lines~\ref{lst:alloc-nums-6} to~\ref{lst:alloc-nums-8}, the program invokes the allocator again to obtain space to store the second value.
In~\Cref{fig:alloc-2}, we depict the state after executing line~\ref{lst:alloc-nums-8}, in which the block to which \lstinline{p1} points has been updated to store 1, and \lstinline{p2} points to another newly allocated and zero-initialized word. Finally, in~\Cref{fig:alloc-3}, we depict the state after line~\ref{lst:alloc-nums-10}, in which \lstinline{p2} now points to the value 2.

By virtue of using the free list, allocation takes constant time. \AllocatorName{} deallocates memory by reversing the allocation operation, which has the effect of putting the newly freed block at the head of the free list. To support variable-sized allocation, \AllocatorName{} partitions the heap into sections of blocks whose sizes are different powers of two, and allocates from the appropriate free list.

\subsection{Core Algorithm Is Not History-Independent} \label{sec:core-interference}

Without modification, the core algorithm does not ensure that data structure representations are history-independent, because the addresses of individual allocations are dependent on the contents of the free list, which vary as the program executes.

In~\Cref{fig:alloc-diff}, we depict how a free list structured differently from~\Cref{fig:alloc-init}, which may possibly be found in the middle of executing a program, would cause the execution of~\Cref{fig:alloc-nums} to produce a final state distinct from~\Cref{fig:alloc-3}. Quantum interference (\Cref{sec:interference}) would fail between~\Cref{fig:alloc-3} and~\Cref{fig:alloc-diff-3} even though these program states are semantically equivalent.

\paragraph{Interference after Deallocation}
As a more extreme and unintuitive example, even after all dynamic memory is deallocated, the representation of the empty heap may cause quantum interference to fail over the remaining register file, leading program outputs to be corrupted.\footnotemark{}%
\footnotetext{An alternative and equivalent mathematical description of the failure is that quantum entanglement~\citep{nielsen_chuang_2010} exists between the memory and the register file, meaning that discarding and measuring the memory would cause the quantum state of the register file to catastrophically collapse from superposition.}%

In~\Cref{fig:entanglement}, we present this example. The initial program state is a superposition:
\[
    \frac{1}{\sqrt{2}}\left(\ket{\pstate{\texttt{l} \goesto [1,2]}} - \ket{\pstate{\texttt{l} \goesto [2,1]}}\right)
\]
The program invokes the operation \lstinline{remove}, which removes and deallocates an element from \lstinline{l}, on the values 1 and 2 in sequence. The result is that all memory is returned to the free list.
Because \lstinline{l} now always points to an empty list, we would expect quantum interference to occur:
\[
    \frac{1}{\sqrt{2}}\left(\ket{\pstate{\texttt{l} \goesto []}} - \ket{\pstate{\texttt{l} \goesto []}}\right) = 0
\]
However, the physical representations of $\pstate{\texttt{l} \goesto []}$, which necessarily incorporate \lstinline{alloc_reg} and the free list stored in memory, differ across the two terms. Thus, quantum interference fails.

As demonstrated by these examples, the core allocation algorithm must be modified to prevent the program from producing an incorrect output after memory allocations and deallocations.

\subsection{Symmetrization Achieves History Independence} \label{sec:symmetrization}

We now develop \emph{symmetrization}, the operation that \AllocatorName{} performs to modify the core allocation algorithm to guarantee history independence.
A key idea is that classically, a \emph{statistical} form of history independence is achieved by making probabilistic choices in the data structure:

\begin{definition}[Statistical History Independence]
    An implementation of an abstract data structure is \emph{statistically history-independent} if and only if, when any two sequences of abstract operations yield equivalent data structures, then the distributions of their physical representations are identical.
\end{definition}

\paragraph{Randomization}
Conceivably, the core allocation algorithm of~\Cref{sec:allocator} may be made statistically history-independent by choosing new allocations uniformly at random. Doing so would require initializing the free list to a uniformly random permutation of the sequence $[\vPtr{}{1}, \ldots, \vPtr{}{m-1}]$.\footnote{Invoking randomness only at initialization means that an allocation immediately following a deallocation will yield the just-deallocated address. This correlation does not violate statistical history independence because the heap always remains in a uniformly random distribution over all possible choices of allocation sites.}

\paragraph{Superposition}
Statistical history independence guarantees a unique probability distribution for the physical representation of the heap. However, this property is insufficient for the support of quantum algorithms, since probability distributions do not exist on equal basis with superpositions in quantum mechanics --- interference only occurs between equal superpositions, not probability distributions, of program states~\citep{bell}. Thus, instead of randomization, \AllocatorName{} uses superposition to achieve full, rather than statistical, history independence for the heap.

\paragraph{Symmetrization}
A single time before the program executes, \AllocatorName{} initializes the free list to a uniform superposition of permutations by invoking a unitary operator that we call \emph{symmetrization}:
\[
    \ket{1, \ldots, n} \mapsto \frac{1}{\sqrt{n!}}\sum_{\pi \in \mathrm{Sym}(n)} \ket{\pi(1), \ldots, \pi(n)}
\]
The notation $\ket{1, \ldots, n}$ denotes a bit string of length $kn$ constituting the concatenation of the $k$-bit binary representations of the integers 1 through $n$.

\begin{theorem}
    If the heap is symmetrized before program execution, then after any sequence of allocation and deallocation operations, the representation of the heap is history-independent.
\end{theorem}

\begin{proof}
    Follows from Theorem 9 in~\citet{teague2001}, replacing equality of probability distributions with equality of superpositions.
\end{proof}

In~\Cref{fig:symmetrization}, we illustrate the result of symmetrization on an empty heap state. Before symmetrization, the free list follows the particular sequence depicted in~\Cref{fig:alloc-init}. Afterward, the free list exists in the unique superposition of all possible permutations, preventing the issue from~\Cref{sec:core-interference}.

In~\Cref{fig:symmetrization-2}, we present the unique physical representation of the program state $\pstate{\texttt{l} \goesto [1,2]}$ under symmetrization, which stores data in a superposition of all possible allocation sites. This uniqueness satisfies history independence and means that quantum interference occurs correctly.

\paragraph{Implementation}

Symmetrization is a central component of quantum simulation, in which it imposes bosonic\footnote{Fermionic antisymmetry, which is closely related, additionally flips the sign according to the parity of the permutation $\pi$.} exchange symmetry~\citep{Liboff1980IntroductoryQM} on the list. A multitude of implementations of this operator exist in the simulation literature, including~\citet{abrams1997} and~\citet{barenco1997}, who achieve implementations in $O(n^2 \log^2 n)$ gates; \citet{chiew2019}, in $O(n^2 \log n)$ gates; and~\citet{Berry2017ImprovedTF}, in $O(n \log n \log \log n)$ gates and parallel time $O(\log n \log \log n)$.\footnote{Comparable to the complexity of the quantum Fourier transform~\citep{qft}.}

\section{\LangName{}: Recursive Data Structure Operations} \label{sec:main-semantics}

This section presents \LangName{}, a language with co-recursive types and bounded recursive functions.

\subsection{Syntax}

The syntax of \LangName{} augments \CoreLang{} with new types, expressions and statements, as well as function declarations $\dDecl$, \emph{recursion bounds} $\bBound$, and programs $\pProgram$. Recursion bounds enable \LangName{} to guarantee that all programs are bounded by classically known parameters.
\begin{alignat*}{3}
    \type \Coloneqq{} & \cdots \mid t \mid \tInd{t}{\type} \quad & \textsf{Declaration}\ \dDecl \Coloneqq{} & \dFun{f}{\bVar?}{x_1 : \type_1, \ldots, x_k : \type_k}{\type}{\sStmt} \\
    \eExp \Coloneqq{} & \cdots \mid \eAlloc{\type} \mid \eCall{f}{\bBound?}{x_1, \ldots, x_k} \quad & \textsf{Bound}\ \bBound \Coloneqq{} & \bVar \mid n \mid \bVar - n' \hspace{1em} (n \in \mathbb{N}, n' \in \mathbb{N} \setminus \{0\}) \\
    \sStmt \Coloneqq{} & \cdots \mid \sReturn{x} \quad & \textsf{Program}\ \pProgram \Coloneqq{} & \dDecl; \pProgram \mid \dFun{\texttt{main}}{}{x : \type}{\type'}{\sStmt}
\end{alignat*}
\paragraph{Types}
The syntax of \LangName{} introduces type variables denoted $t$ as well as co-recursive types $\tInd{t}{\type}$ in which the variable $t$ may appear in $\type$.

\paragraph{Expressions}
The expression $\eAlloc{\type}$ allocates zero-initialized space for a value of type $\type$ and returns a pointer to that type. The expression $\eCall{f}{\bBound?}{x_1, \ldots, x_k}$ invokes function $f$ on arguments $x_1, \ldots x_k$ with recursion bound $\bBound$, or no bound if $\bBound$ is absent.

\paragraph{Statements}
The statement $\sReturn{x}$ returns the value of $x$ to the caller.

\paragraph{Declarations}
The declaration $\dFun{f}{\bVar?}{x_1 : \type_1, \ldots, x_k : \type_k}{\type}{\sStmt}$ declares a function named $f$ with recursion bound variable $\bVar$ (or no bound if absent), arguments $x_1, \ldots, x_k$ with types $\type_1, \ldots, \type_k$ respectively, return type $\type$, and body $\sStmt$. A function body must contain exactly one return statement, as the final statement. \LangName{} permits recursive, but not mutually recursive, functions.\footnote{The presented type system of \LangName{} could be modified straightforwardly to support mutual recursion.}

\paragraph{Bounds}
A recursion bound may be a variable $\bVar$, a classical integer $n$, or their difference.

\paragraph{Programs}
A program is a series of function declarations followed by the entry point \texttt{main}.

\subsection{Type System}

The type system defines well-formed types and enforces that all recursion is classically bounded.

\paragraph{Type Formation}
In~\Cref{sec:full-semantics}, we define the well-formation of types. A context $\tctx$ is a set of type variables currently in scope. The judgment $\typeok{\tctx}{\type}$ states that under context $\tctx$, the type $\type$ is well-formed. All rules are standard except the case for co-recursive types, which additionally requires that a type variable $t$ may only appear under a pointer indirection.

For example, the type of lists of integers is $\texttt{list} \triangleq \tInd{t}{\tPair{\tUInt}{\tPtr{t}}}$, in which we require that the self-reference $t$ must occur under a pointer, similarly to structure definitions in C.
We interpret co-recursive types equi-recursively. For example, we equate the list type $\texttt{list} \triangleq \tInd{t}{\tPair{\tUInt}{\tPtr{t}}}$ with its unfolding $\tPair{\tUInt}{\tPtr{\texttt{list}}}$. In~\Cref{sec:full-semantics}, we define the judgment $\typeequiv{\type_1}{\type_2}$, an equivalence relation that is compatible with unfolding co-recursive types. The default value of a co-recursive type, $\default{\tInd{t}{\type}}$, is defined to be the default value of its unfolding, $\default{\subst{\type}{t}{\tInd{t}{\type}}}$.

\paragraph{Recursive Functions}

The type system of \LangName{} introduces a function context $\fctx$ that maps functions to their recursion bound variable (if present), argument list, and return type, and augments the typing judgments with the function context $\fctx$ and the function $f$ being type-checked.

\begin{figure}
\resizebox{\textwidth}{!}{
\parbox{1.1\textwidth}{
\begin{mathpar}
\inferrule[TE-CallNonRecursive]{f_1 \neq f_2 \\ \get{\fctx}{f_2} = (\textsf{None}, ((\type_1, \ldots, \type_k), \type)) \\\\ \hastype{\ctx}{x_1}{\type_1} \quad \cdots \quad \hastype{\ctx}{x_k}{\type_k}}{\hastypef{\fctx}{\ctx}{f_1}{\eCall{f_2}{}{x_1, \ldots, x_k}}{\type}}

\inferrule[TE-CallBounded]{\bBound = n \text{ or } \bBound = \bVar \text{ or } \bVar - n \text{ where } \textsf{fst}(\get{\fctx}{f_1}) = \bVar \\\\ f_1 \neq f_2 \\ \get{\fctx}{f_2} = (\textsf{Some}\ b_2, ((\type_1, \ldots, \type_k), \type)) \\\\ \hastype{\ctx}{x_1}{\type_1} \quad \cdots \quad \hastype{\ctx}{x_k}{\type_k}}{\hastypef{\fctx}{\ctx}{f_1}{\eCall{f_2}{\bBound}{x_1, \ldots, x_k}}{\type}}

\inferrule[TE-CallSelf]{\get{\fctx}{f} = (\textsf{Some}\ b, ((\type_1, \ldots, \type_k), \type)) \\ \hastype{\ctx}{x_1}{\type_1} \quad \cdots \quad \hastype{\ctx}{x_k}{\type_k}}{\hastypef{\fctx}{\ctx}{f}{\eCall{f}{\bVar - n}{x_1, \ldots, x_k}}{\type}}
\end{mathpar}
}}
\caption{Selected typing rules for expressions in \LangName{}. The full definition is presented in~\Cref{sec:full-semantics}.} \label{fig:main-exp-types}
\end{figure}

In~\Cref{fig:main-exp-types}, we define the new typing judgment for expressions. The judgment $\hastypef{\fctx}{\ctx}{f}{\eExp}{\type}$ states that under function context $\fctx$ and context $\ctx$, inside function $f$, expression $\eExp$ has type $\type$. The type system assigns $\eAlloc{\type}$ the type $\tPtr{\type}$. Most of the remaining rules simply add $\fctx$ and $f$, and we show only the new rules for function calls.

If the function being called, $f_2$, is not recursive, it must be called with no bound and the correct arguments. If it is recursive, it must be called with a bound, which in turn may refer only to the bound variable of the caller function $f_1$. In the last rule, the function $f$ is making a recursive call to itself, which requires the bound to be smaller than the bound variable $\bVar$.

In~\Cref{sec:full-semantics}, we define the new judgment for well-formation of statements. The judgment $\stmtokf{\fctx}{\ctx}{f}{\sStmt}{\ctx'}$ states that under $\fctx$ and $\ctx$, inside function $f$, statement $\sStmt$ is well-formed and yields new context $\ctx'$. Most of the rules simply add $\fctx$ and $f$. The new rule for \texttt{return} requires that the context contains only the returned variable, and yields an empty context.

\subsection{Translation into \CoreLang{}} \label{sec:translation}

\LangName{}'s dynamic semantics is defined by translation into \CoreLang{}. Each statement invoking \texttt{alloc} is replaced with the three \CoreLang{} statements defined in~\Cref{sec:allocator}. Each function call, starting from \texttt{main}, is translated by recursively inlining the function body as follows:

In the general case, a function call has a specified recursion bound.
Given a function call $\sBind{x}{\eCall{f}{n}{x_1, \ldots, x_k}}$, the translator looks up the definition of $f$ and renames its arguments and return value to be $\dFun{f}{\bVar}{x_1 : \type_1, \ldots, x_k : \type_k}{\type}{\sSeq{\sStmt}{\sReturn{x}}}$. It also renames the local variables of $f$ to avoid conflicts with the caller. It then inlines $\sStmt$ into the caller, replacing instances of $\bVar$ with $n$. It recursively expands introduced function calls until the recursion bound reaches zero, and it replaces expressions $\eCall{f}{0}{x_1, \ldots, x_k}$ with $\default{\type}$.
A reverse function call $\sUnbind{x}{\eCall{f}{n}{x_1, \ldots, x_k}}$ is similar, except that the body of $f$ is inverted (\Cref{sec:program-inversion}) before it is inlined.

Additional features of \LangName{}, such as \lstinline{with}-\lstinline{do} blocks, nested expressions, and patterns, also lower into \CoreLang{} and are described in~\Cref{sec:additional-features}.
In~\Cref{sec:translation-example}, we provide a detailed example for the translation of a recursive \LangName{} program into \CoreLang{} by the procedure above.
\begin{theorem}
    In a well-typed program, the process of inlining recursive calls always terminates.
\end{theorem}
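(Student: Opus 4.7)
\begin{proofsketch}
The plan is to exhibit a well-founded termination measure for the inlining process. Two structural facts make this possible: mutual recursion is disallowed, so the call graph restricted to non-self edges is a DAG; and any self-call from a function with bound variable $\bVar$ must carry a recursion bound of the form $\bVar - n'$ with $n' \in \mathbb{N}^+$ by \rulename{TE-CallSelf}, so self-calls strictly decrease the bound after substitution. Unbounded functions moreover cannot self-call at all, since both \rulename{TE-CallUnbounded} and \rulename{TE-CallBounded} require $f_1 \neq f_2$.

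First I would fix a topological ordering of the function declarations induced by the non-self call graph, assigning each function a natural-number rank with \texttt{main} at the top and leaves at the bottom. The absence of mutual recursion guarantees this ordering is well-defined, even among unbounded functions that may call other unbounded functions, since no cycle is permitted. Each pending function call $\eCall{f}{\bBound}{\cdots}$ in the partially inlined program receives a measure under lexicographic order: $(\mathrm{rank}(f), n)$ if $\bBound = n$ is a concrete natural number, and $(\mathrm{rank}(f), 0)$ if $\bBound$ is absent. Lift this to the multiset order over all pending calls in the intermediate program.

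Next I would argue that each inlining step strictly decreases this multiset measure. Inlining $\eCall{f}{n}{\cdots}$ substitutes $\bVar \mapsto n$ in the body of $f$ and introduces calls of three kinds: (a) calls to other functions $g$, which by the ban on mutual recursion have strictly smaller rank than $f$; (b) self-calls whose bound, after substitution, becomes $n - n'$ with $n' \geq 1$, strictly smaller in the second component at the same rank; and (c) calls whose bound has reached $0$, which the translator rewrites to $\default{\type}$ and contribute no residual work. Each of these is strictly smaller than the original call $\eCall{f}{n}{\cdots}$ in the multiset order.

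The main obstacle is the bookkeeping around the substitution $\bVar \mapsto n$ inside bound expressions $\bVar - n'$, especially when $n < n'$: the resulting numeric bound is nonpositive and must be treated uniformly as the base case that rewrites to $\default{\type}$. Once this convention is pinned down, along with the rank assignment for both bounded and unbounded functions, termination follows from well-foundedness of the lexicographic-then-multiset order by a straightforward induction on the measure.
\end{proofsketch}
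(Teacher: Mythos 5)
Your proposal is correct and rests on the same core idea as the paper's proof: each self-call carries a bound of the form $\bVar - n'$ with $n' \ge 1$, so after substitution the bound strictly decreases toward the base case at zero, and induction on the bound gives termination. The paper's argument is just that two-sentence induction on the bound (leaving the acyclicity of non-self calls implicit), whereas you make the full well-founded measure explicit via the rank-then-bound lexicographic multiset order; this is a faithful elaboration rather than a different route.
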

\begin{proof}
    Every recursive call has a bound strictly smaller than the bound of the function definition. A call with bound zero terminates immediately. By induction, all other recursive calls terminate.
\end{proof}
As a result, a well-typed \LangName{} program can be converted into a bounded-size quantum circuit.

\subsection{Translation Soundness}

The soundness and reversibility properties of \LangName{} follow from those of \CoreLang{}. A well-typed \LangName{} program translates to a \CoreLang{} program that is well-typed if \emph{coercion} between compatible pointer types is additionally permitted.

Coercion is used by the implementation of $\eAlloc{\type}$ in~\Cref{sec:allocator} to assign a pointer to a free block into a variable of type $\tPtr{\type}$. Since a free block is bitwise compatible with a value of type $\type$, such coercion is always safe with respect to the operational and circuit semantics of \CoreLang{}.

\begin{theorem}
    The inlining of recursive calls and \texttt{alloc} operators in a well-typed \LangName{} program produces a \CoreLang{} program that is well-typed if the coercion of compatible pointers is permitted.
\end{theorem}
\begin{proof}
    The inlining of $\eAlloc{\type}$ is type-correct if coercions are permitted, as discussed above. The type-correctness of the inlining of recursive functions holds by induction on the bound.
\end{proof}

\section{\LibraryName{}: Data Structure Library and Case Study} \label{sec:evaluation}

This section presents \LibraryName{}, a library of abstract data structures we implemented in \LangName{}, as well as a case study of development challenges during its implementation.

\subsection{Language Implementation}

To develop the library, we implemented a \LangName{} interpreter in OCaml that classically executes the operational semantics. The type checker ensures that all recursion is bounded, and the interpreter verifies at runtime that 1) values are uncomputed correctly and 2) all heap memory is freed.

We also implemented a compiler for \LangName{} into quantum circuits. Given a \LangName{} program, the compiler generates for each function a reversible logic circuit that when executed on a bit-representation of its input stores its output into an auxiliary register. The compiler outputs assembly code that invokes primitive logical, arithmetic, and memory operations.

The compiler can further instantiate these primitive operations into multi-controlled-NOT gates. The current implementations of these operations prioritize simplicity; more optimized designs are possible but not yet implemented. The output is in a simple textual netlist format that is amenable to conversion for use in other quantum computation frameworks.

The source code of the \LangName{} interpreter and compiler and the \LibraryName{} library and tests is available in the artifact of this paper~\citep{artifact} and on GitHub.\footnote{\url{https://github.com/psg-mit/tower-oopsla22}}

\subsection{Summary of Data Structure Library} \label{sec:data-structure-library}

\Cref{tbl:results} summarizes the \LibraryName{} data structure library. We implemented the abstract data structures of lists, stacks, queues, strings, and sets.\footnotemark{} In the table, the name of an abstract data structure is followed by its underlying implementation in parentheses. All implementations are history-independent except for hash table-based sets.
\footnotetext{Maps may be implemented as a straightforward augmentation of sets with values stored alongside each key.}

The time complexity for certain pointer-based data structures is a function of $n$, the number of elements in the data structure. For word-based data structures, the complexity of operations instead scales with the system word size $k$,\footnotemark{} and becomes polynomial in $k$ if the operation performs recursion bounded by $k$ and performs operations over words of size $k$ at each level.
\footnotetext{The word size $k$ is in turn logarithmic in the maximum number of elements $n$ of the data structure.}

We report the counts of qubits and gates used by the compiled circuit for each operation, including temporary qubits, after minimal optimization. To focus on the fundamental complexity of each operation, these counts exclude 1) the qubits constituting the qRAM, 2) qubits and gates internal to the implementation of primitive logical, arithmetic, and memory operations, and 3) gates to copy and swap qubits, all of which are highly dependent on parameter and optimization choices. For operations with variable complexity, these counts are functions of $n$ or $k$. Word sizes $4 \le k \le 12$ are assumed for operations whose complexity depends on $k$, and $k = 8$ is assumed otherwise.

In~\Cref{sec:data-structure-descriptions}, we detail the characteristics of each data structure.
During the development process, we encountered several challenges that do not arise in classical programming, and share them with future quantum developers.
In the following sections, we describe three challenges: recursive uncomputation, mutated uncomputation, and branch sequentialization.

\newcommand{\subname}{\textcolor{gray}{$-$}}
\begin{table}
\caption{Overview of \LibraryName{} library. All operations are reversible. ``Recursion''  denotes whether the operation uses bounded recursion. ``Mutation'' indicates whether it leaves input in a mutated state. ``Complexity'' indicates its time complexity in terms of the number of elements $n$ or the word size $k$. ``LoC'' counts the lines of code implementing the operation and its dedicated helpers, excluding calls to other operations listed here. ``Qubits'' counts qubits used, excluding those internal to the qRAM and primitive logic/arithmetic/memory operations. ``Gates'' counts primitive logic/arithmetic/memory gates invoked, excluding copy/swap gates.}
\begin{threeparttable}
\resizebox{\textwidth}{!}{%
\begin{tabular}{ l c c c c c c c }
\toprule
Data Structure                      & Reversible & Recursion & Mutation & Complexity & LoC & Qubits & Gates \\
\midrule
List & & & \\
\subname{} \texttt{length}          & Yes & Yes & No  & $O(n)$ & 20 & $34n + 32$ & $23n + 3$ \\
\subname{} \texttt{sum}             & Yes & Yes & No  & $O(n)$ & 20 & $34n + 40$ & $21n + 3$ \\
\subname{} \texttt{find\_pos}       & Yes & Yes & No  & $O(n)$ & 20 & $42n + 31$ & $19n + 3$ \\
\subname{} \texttt{remove}          & Yes & Yes & Yes & $O(n)$ & 48 & $26n + 56$ & $42n + 3$ \\
Stack (list) & & & \\
\subname{} \texttt{push\_front}     & Yes & No  & Yes & $O(1)$ & 8 & 40 & 4 \\
\subname{} \texttt{pop\_front}      & Yes & No  & Yes & $O(1)$ & 8 & 48 & 4 \\
Queue (list) & & & \\
\subname{} \texttt{push\_back}      & Yes & Yes & Yes & $O(n)$ & 21 & $34n + 32$ & $24n$ \\
\subname{} \texttt{pop\_front}      & Yes & No  & Yes & $O(1)$ & 8 & 48 & 4 \\
String (word) & & & \\
\subname{} \texttt{is\_empty}       & Yes & No  & No  & $O(1)$ & 2 & 25 & 3 \\
\subname{} \texttt{length}          & Yes & No  & No  & $O(1)$ & 2 & 24 & 1 \\
\subname{} \texttt{get\_prefix}     & Yes & No  & No  & $O(k)$ & 8 & $11k$ & 52 \\
\subname{} \texttt{get\_substring}  & Yes & No  & No  & $O(k)$ & 8 & $12k$ & 54 \\
\subname{} \texttt{get}             & Yes & No  & No  & $O(k)$ & 7 & $6k + 1$ & 19 \\
\subname{} \texttt{is\_prefix}      & Yes & Yes & No  & $O(\textrm{poly}(k))$ & 26 & $k^2 + 11k$ & $98k + 3$ \\
\subname{} \texttt{num\_matching}   & Yes & Yes & No  & $O(\textrm{poly}(k))$ & 42 & $k^2 + 13k + 4$ & $110k + 127$ \\
\subname{} \texttt{equal}           & Yes & No  & No  & $O(k)$ & 8 & $6k + 3$ & 5 \\
\subname{} \texttt{concat}          & Yes & No  & No  & $O(k)$ & 9 & $11k$ & 8 \\
\subname{} \texttt{compare}         & Yes & Yes & No  & $O(\textrm{poly}(k))$ & 27 & $5k^2 + 12k$ & $108k + 3$ \\
Set (radix tree) & & & \\
\subname{} \texttt{insert}          & Yes & Yes & Yes & $O(\textrm{poly}(k))$ & 136 & $13k^2 + 21k + 9$ & $1440k^2 + 5056k$ \\
\subname{} \texttt{contains}        & Yes & Yes & No  & $O(\textrm{poly}(k))$ & 334 & $17k^2 + 18k + 2$ & $784k^2 + 1612k + 1$ \\
Set (hash table)${}^\ast$ & & & \\
\subname{} \texttt{insert}          & Yes & Yes & Yes & $O(n)$ & 63 & $52n + 72$ & $68n + 15$ \\
\subname{} \texttt{contains}        & Yes & Yes & No  & $O(n)$ & 7 & $52n + 81$ & $136n + 39$ \\
\bottomrule
\end{tabular}%
}
\begin{tablenotes}
\footnotesize\sffamily
\item${}^\ast$ Hash table-based sets are not history-independent.
\end{tablenotes}
\end{threeparttable}
\label{tbl:results}
\end{table}

\subsection{Challenge 1: Recursive Uncomputation} \label{sec:recursive-uncomputation}

\emph{Recursive uncomputation} arises when the result of a recursive call is uncomputed, and causes certain programs to incur exponential slowdown in asymptotic time complexity. We must be careful to appropriately craft the implementation to avoid naive implementations that cause this slowdown.

\begin{figure}
\centering
\begin{minipage}[t]{.45\textwidth}
\begin{lstlisting}[label={lst:length-exponential}]
fun length[n](l: ptr<list>) -> uint {
  with {
    let l_empty <- l == null;
  } do if l_empty {
    let out <- 0;
  } else with {
    let temp <- default<list>;
    *l <-> temp;
    let next <- temp.2;
    let r <- length[n-1](next);
  } do {
    let out <- r + 1;
  }
  return out;
}
\end{lstlisting}
\end{minipage}%
\begin{minipage}[t]{.55\textwidth}
\begin{lstlisting}[label={lst:length-linear}]
fun length[n](l: ptr<list>, acc: uint) -> uint {
  with {
    let l_empty <- l == null;
  } do if l_empty {
    let out <- acc;
  } else with {
    let temp <- default<list>;
    *l <-> temp;
    let next <- temp.2;
    let r <- acc + 1;
  } do {
    let out <- length[n-1](next, r);
  }
  return out;
}
\end{lstlisting}
\end{minipage}

\setlength{\abovecaptionskip}{0pt}
\begin{minipage}[t]{.45\textwidth}
\caption{$O(2^n)$-time list \lstinline{length}.} \label{fig:length-exponential}
\end{minipage}%
\begin{minipage}[t]{.55\textwidth}
\caption{$O(n)$-time, avoiding recursive uncomputation.} \label{fig:length-linear}
\end{minipage}
\end{figure}
In~\Cref{fig:length-exponential}, we present a correct but inefficient way to compute the length of a linked list.
Line~\ref{lst:length-exponential-3} determines if the list \lstinline{l} is empty. If so, line~\ref{lst:length-exponential-5} sets the output to zero, and if not, lines~\ref{lst:length-exponential-7}--\ref{lst:length-exponential-9} obtain a pointer \lstinline{next} to the tail of the list. Line~\ref{lst:length-exponential-10} recursively computes the length of the tail. Line~\ref{lst:length-exponential-12} adds one to the result of the recursion. After the \lstinline{do}-block completes, \LangName{} reverses the effects of both prior \lstinline{with}-blocks to uncompute all temporary variables and restore the original state of the list.

\paragraph{Recursive Uncomputation}

Though the program in~\Cref{fig:length-exponential} is correct, its time complexity is $O(2^n)$. The reason is that the recursive call on line~\ref{lst:length-exponential-10} produces a temporary value \lstinline{r} that is uncomputed at the end of the \lstinline{do}-block. As a result, \lstinline{length} makes both a forward and a reverse recursive call at each level, resulting in exponential complexity.

Avoiding recursive uncomputation requires not uncomputing the result of recursive calls. One possible approach is inspired by tail recursion from classical programming, where the result of a recursive call is returned without any intervening operations.

In~\Cref{fig:length-linear}, we avoid recursive uncomputation using an extra accumulator argument. Line~\ref{lst:length-linear-5} returns the accumulator in the base case. Line~\ref{lst:length-linear-10} adds one to the accumulator, and line~\ref{lst:length-linear-12} recursively calls the function with the new accumulator and directly assigns the result to the output.

Calling this function with an initial accumulator of zero returns the length. In the new implementation, no second recursive call is necessary, and the implementation achieves $O(n)$ complexity.

We encountered this challenge in the implementations of \lstinline{find_pos}, \lstinline{length}, \lstinline{remove}, and \lstinline{sum} for lists as well as \lstinline{is_prefix}, \lstinline{num_matching}, and \lstinline{compare} for strings.

\subsection{Challenge 2: Mutated Uncomputation} \label{sec:mutated-uncomputation}

\emph{Mutated uncomputation} refers to when uncomputing a temporary value is not straightforward because the program mutates data under conditional branches. The typical strategy of uncomputing a temporary using the expression that originally computed it fails when variables in that expression have been mutated by intervening statements.

In~\Cref{fig:push-back}, we implement the \lstinline{push_back} operation, which adds \lstinline{x} to the end of \lstinline{l}. Line~\ref{lst:push-back-2} computes the flag \lstinline{l_empty}. If it is true, lines~\ref{lst:push-back-4}--\ref{lst:push-back-7} allocate a new node storing \lstinline{x} and set \lstinline{l} to point to it. Otherwise, line~\ref{lst:push-back-12} extracts the \lstinline{tail} of the list, and line~\ref{lst:push-back-15} recursively calls \lstinline{push_back} on it.

\setlength{\intextsep}{0pt}%
\begin{wrapfigure}[23]{r}{.46\textwidth}
\begin{lstlisting}[label={lst:push-back}]
fun push_back[n](l: ptr<list>, x: uint) {
  let l_empty <- l == null;
  if l_empty {
    let head <- alloc<list>;
    l <-> head;
    let head -> null;
    let node <- (x, null);
  } else {
    let node <- default<list>;
    *l <-> node;
    with {
      let (h, tail) <- node;
      let node -> (h, tail);
    } do {
      let () <- push_back[n-1](tail, x);
    }
  }
  let l_empty -> node.2 == null;
  *l <-> node;
  let node -> default<list>;
  return ();
}
\end{lstlisting}
\setlength{\abovecaptionskip}{0pt}
\caption{Implementation of \lstinline{push_back}.} \label{fig:push-back}
\end{wrapfigure}

\paragraph{Mutated Uncomputation}

Line~\ref{lst:push-back-18} uncomputes \lstinline{l_empty} differently from its computation on line~\ref{lst:push-back-2}. The reason is that \lstinline{l} may have been modified by line~\ref{lst:push-back-5}, which would break the relationship between \lstinline{l} and \lstinline{l_empty}. Instead, the program uncomputes \lstinline{l_empty} based on whether the tail of the new list stored in \lstinline{node} is \lstinline{null} on line~\ref{lst:push-back-18}, which is true if and only if \lstinline{l_empty} was true after line~\ref{lst:push-back-2}.

As shown, correctly uncomputing certain values requires reasoning about the invariants that hold on the program state after mutation.
We encountered this challenge in \lstinline{remove} for lists, \lstinline{push_back} for queues, and \lstinline{insert} for sets implemented using both hash tables and radix trees.
By contrast, operations that do not mutate data may typically uncompute temporaries straightforwardly.

\subsection{Challenge 3: Branch Sequentialization} \label{sec:branch-sequentialization}
\begin{figure}
\centering
\vspace*{-0.5em}
\begin{minipage}[t]{.48\textwidth}
\begin{lstlisting}[label={lst:contains-linear}]
fun contains(t: tree, k: uint) -> bool {
  if t == null { let out <- false; }
  else if t.1 == k { let out <- true; }
  else if t.1 < k {
    let out <- contains(t.2, k);
  } else {
    let out <- contains(t.3, k);
  }
  return out;
}
\end{lstlisting}
\end{minipage}%
\begin{minipage}[t]{.52\textwidth}
\begin{lstlisting}[label={lst:contains-log}]
fun contains(t: tree, k: uint) -> bool {
  if t == null { let out <- false; }
  else if t.1 == k { let out <- true; }
  else {
    if t.1 < k { let child <- t.2; }
    else { let child <- t.3; }
    let out <- contains(child, k);
  }
  return out;
}
\end{lstlisting}
\end{minipage}

\setlength{\abovecaptionskip}{0pt}
\begin{minipage}[t]{.45\textwidth}
\caption{$O(n)$-time binary tree traversal.} \label{fig:contains-linear}
\end{minipage}%
\begin{minipage}[t]{.55\textwidth}
\caption{$O(\log n)$-time, avoiding branch sequentialization.} \label{fig:contains-log}
\end{minipage}
\end{figure}
\emph{Branch sequentialization} denotes the fact that the time complexity of a conditionally branching quantum program may be larger than its classical equivalent and the developer may need to structure the program differently to avoid this slowdown.

In~\Cref{fig:contains-linear}, we present a pseudo-\LangName{} program that traverses a binary search tree. Classically, if \lstinline{t} is balanced, then the time complexity of the program is $O(\log n)$.

\paragraph{Branch Sequentialization}

However, if \Cref{fig:contains-linear} is compiled to a quantum circuit (\Cref{sec:circuit-semantics}), each \lstinline{if}-statement becomes a gate that conditions over the gate translation of the \lstinline{if}-body. Each recursive level of the resulting circuit contains a gate that traverses the left child and a gate that traverses the right child, resulting in $O(n)$ total complexity. Put simply, the time complexity of a quantum program is the sum and not the maximum of the complexities of its possible branches.

Efficiently traversing a branching data structure requires making only one recursive call in the function. In~\Cref{fig:contains-log}, we present a program that performs such a traversal in $O(\log n)$.

We encountered this challenge in radix tree \lstinline{insert} and \lstinline{contains}, which use a similar approach to attain poly-logarithmic complexity in the number of elements.

\section{Limitations} \label{sec:future-work}

\paragraph{Error Checking}
Programming with memory causes numerous errors such as null pointer dereference, use after free, and out of memory. Detecting such errors is difficult in quantum programs, which cannot use exceptional control flow. Consequently, \LangName{} defines a dereference of a null pointer to be a no-op and does not dynamically enforce memory safety. Developments in quantum runtime error checking such as~\citet{assertions} may enable detection of memory errors.

\paragraph{Memory Management}
In \LangName{}, the developer must manually reset and deallocate memory, which raises the possibility of automatic memory reclamation. Unfortunately, classical garbage collection traverses the heap, which has a high time cost. Also, erasing information cannot be performed reversibly without uncomputation, which is difficult after memory has become garbage.

\AllocatorName{} incurs internal and external heap fragmentation when allocating variable-sized blocks of memory. Practical dynamic memory management in quantum programs must reconcile space inefficiency of simple allocation schemes with the expense of traversing the heap.

\section{Related Work} \label{sec:related-work}

\paragraph{Quantum Programming with Data Structures}
The three requirements for data structures in superposition were introduced by~\citet{ambainis2003} and elaborated by \citet{bernstein2013} and~\citet{aaronson2019}, who propose using radix trees~\citep{morrison} to implement sets. These authors also define quantum operations such as creating a superposition of all sets.

Other examples of data structures in superposition have been proposed by~\citet{Jeffery_2013,shi2021,chen2021,Booth_2021}.
They may be distinguished from data structures \emph{not} in superposition that are accessed by quantum algorithms for tasks such as simulation~\citep{chakraborty18} and machine learning~\citep{kerenidis17}.

\citet{Pechoux_2020} develop a quantum programming language with inductive data types and recursion. However, in this language it is not possible to construct a superposition over differently linked structures. For example, it is not possible to construct a superposition over linked lists with different lengths, nor over trees with different shapes. Such a restriction is not present in \LangName{}.

In fact, this restriction prevents the implementation of a set that satisfies history independence. To illustrate, suppose we implement sets concretely as sorted, duplicate-free linked lists, as in~\Cref{sec:examples}. Then, if \lstinline{x} denotes a set whose state is a superposition of $[1,2]$ and $[1,3]$, invoking $\texttt{insert}(\texttt{x}, 2)$ should yield a superposition of $[1,2]$ and $[1,2,3]$. However, this superposition is over two lists with different lengths, which is not permitted by~\citet{Pechoux_2020}.

\paragraph{Reversible Programming}

Janus~\citep{Lutz86,tokoyama2007} introduced reversible programming and conditions that are checked after an \texttt{if}-block completes. Consequent developments include \citet{thomsen2015,bowman2011,haulund2017}. Other applications include bidirectional lenses~\citep{bohannon2008} and reversible hardware~\citep{Vieri1995}.

\paragraph{Reversible Memory Management}
Uncomputation~\citep{bennett1989} has become a central feature of quantum programming languages such as~\citet{silq,qsharp,qiskit}. The fact that recursive uncomputation may lead to asymptotic slowdown was noted by~\citet{aaronson2002}. Proposed solutions include whole-program uncomputation~\citep{unqomp} and heuristics trading off between space and time~\citep{square}.

\citet{axelsen2013} present a reversible free list memory allocator, which inspired parts of \AllocatorName{} but is not history-independent.
\citet{buhrman2022} propose a quantum allocator that traverses a prefix tree stored in qRAM for every allocation. By contrast, \AllocatorName{} symmetrizes the heap once and then performs allocation in constant time.

\paragraph{History Independence}

\citet{teague2001} introduced history independence in the context of privacy and cryptography, in which it guarantees that an adversary with access to the physical memory of a system cannot recover sensitive information about the sequence of events that produced the current state. They present history-independent versions of hash tables and a general scheme for memory allocation that inspired parts of \AllocatorName{}. Later work~\citep{hartline,bajaj} elaborates the deterministic and statistical forms of history independence and provides additional schemes for history-independent data structure construction.

By contrast, oblivious data structures~\citep{wang2014} guarantee that the memory access patterns, e.g. the sequence of dereferenced pointers, of semantically equivalent data structure operations are identical.
Unlike history independence, obliviousness is not necessary for a data structure to operate in superposition. Nevertheless, oblivious techniques such as static and garbled circuits~\citep{zahur,yao} and oblivious RAM~\citep{goldreich} enable construction of fixed-control-flow data structures and may be useful in the quantum domain.

\section{Conclusion} \label{sec:conclusion}

Data structures establish abstractions over mathematical objects, such as lists and sets, that are taken for granted in the design of algorithms and ubiquitous in classical programming. However, data structure implementations suitable under quantum superposition must satisfy additional properties, and existing quantum programming frameworks do not support their construction.

In this work, we introduce \LangName{}, a language for quantum programming with data structures, and \AllocatorName{}, a memory allocator that operates in quantum superposition, along with \LibraryName{}, a library of abstract data structures, including the first implementation of the set data structure satisfying the requirements laid out by quantum algorithms. We hope that the library will be useful for quantum developers and that \LangName{} and \AllocatorName{} will lead to even more useful abstractions.

Unlocking the potential of quantum computation rests on our ability to leverage the accumulated wisdom of algorithm and software design while grappling with physical and practical constraints that are unfamiliar to classical developers. As we have shown, understanding classically-motivated principles of computing, such as history independence, allows us to build bridges to abstractions such as data structures that will enable scalable and accessible quantum software.

\begin{acks}
We would like to thank Jesse Michel, Logan Weber, Tian Jin, and anonymous reviewers who provided feedback on drafts of this paper. We thank Scott Aaronson for directing us to the quantum algorithms that were critical to the motivation of this work. We also give special thanks to Chris McNally for invaluable discussions on the nature of quantum random-access memory, and introducing to the authors the physical concept of exchange symmetry.

This work was supported in part by the MIT-IBM Watson AI Lab and the Sloan Foundation. Any opinions, findings, and conclusions or recommendations expressed in this material are those of the authors and do not necessarily reflect the views of the funding agencies.
\end{acks}

\bibliography{biblio.bib}

\newpage

\appendix
\section{Full Semantics} \label{sec:full-semantics}

\begin{figure}[!htb]
\begin{mathpar}
\inferrule[TV-Var]{\vphantom{\ctx}}{\hastype{\ctx, x : \type}{x}{\type}}

\inferrule[TV-Unit]{\vphantom{\ctx}}{\hastype{\ctx}{\vUnit}{\tUnit}}

\inferrule[TV-Pair]{\hastype{\ctx}{x_1}{\type_1} \\ \hastype{\ctx}{x_2}{\type_2}}{\hastype{\ctx}{\ePair{x_1}{x_2}}{\tPair{\type_1}{\type_2}}}

\inferrule[TV-Num]{\vphantom{\ctx}}{\hastype{\ctx}{\vNum{n}}{\tUInt}}

\inferrule[TV-Bool]{b \in \{\vTrue, \vFalse\}}{\hastype{\ctx}{b}{\tBool}}

\inferrule[TV-Null]{\vphantom{\ctx}}{\hastype{\ctx}{\vNull{\type}}{\tPtr{\type}}}

\inferrule[TV-Ptr]{\vphantom{\ctx}}{\hastype{\ctx}{\vPtr{\type}{p}}{\tPtr{\type}}}
\end{mathpar}
\caption{Typing rules for values in \CoreLang{}.} \label{fig:core-val-types-full}
\end{figure}

\begin{figure}[!htb]
\begin{mathpar}
\inferrule[TE-Val]{\hastype{\ctx}{\vValue}{\type}}{\hastype{\ctx}{\vValue}{\type}}

\inferrule[TE-Proj]{\hastype{\ctx}{x}{\tPair{\type_1}{\type_2}}}{\hastype{\ctx}{\eProj{i}{x}}{\type_i}}

\inferrule[TE-Not]{\hastype{\ctx}{x}{\tBool}}{\hastype{\ctx}{\eUnop{\oNot}{x}}{\tBool}}

\inferrule[TE-Test]{\hastype{\ctx}{x}{\type} \\ \type \in \{\tUInt, \tPtr{\type'}\}}{\hastype{\ctx}{\eUnop{\oTest}{x}}{\tBool}}

\inferrule[TE-Lop]{\hastype{\ctx}{x_1}{\tBool} \\ \hastype{\ctx}{x_2}{\tBool} \\ bop \in \{\oAnd, \oOr\}}{\hastype{\ctx}{\eBinop{x_1}{bop}{x_2}}{\tBool}}

\inferrule[TE-Aop]{\hastype{\ctx}{x_1}{\tUInt} \\ \hastype{\ctx}{x_2}{\tUInt} \\ bop \in \{\oAdd, \oSub, \oMul\}}{\hastype{\ctx}{\eBinop{x_1}{bop}{x_2}}{\tUInt}}
\end{mathpar}
\caption{Typing rules for expressions in \CoreLang{}.} \label{fig:core-exp-types-full}
\end{figure}

\begin{figure}[!htb]
\begin{mathpar}
\inferrule[SE-Var]{\get{\reg}{x'} = \vValue \\ \hastype{\cdot}{\vValue}{\type}}{\estep{x'}{\reg}{x}{\default{\type}}{\vValue}}

\inferrule[SE-Pair]{\get{\reg}{x_1} = \vValue_1 \\ \get{\reg}{x_2} = \vValue_2 \\ \hastype{\cdot}{\vValue_1}{\type_1} \\ \hastype{\cdot}{\vValue_2}{\type_2}}{\estep{\ePair{x_1}{x_2}}{\reg}{x}{\ePair{\default{\type_1}}{\default{\type_2}}}{\ePair{\vValue_1}{\vValue_2}}}

\inferrule[SE-Val]{v \text{ is not a variable or pair} \\ \hastype{\cdot}{\vValue}{\type}}{\estep{\vValue}{\reg}{x}{\default{\type}}{\vValue}}

\inferrule[SE-Proj]{\get{\reg}{x'} = \ePair{\vValue_1}{\vValue_2} \\ \hastype{\cdot}{\vValue_i}{\type}}{\estep{\eProj{i}{x'}}{\reg}{x}{\default{\type}}{\vValue_i}}

\inferrule[SE-Not]{\get{\reg}{x'} = b \\ b \in \{\vTrue, \vFalse\}}{\estep{\eUnop{\oNot}{x'}}{\reg}{x}{\vFalse}{\neg b}}

\inferrule[SE-TestNum]{\get{\reg}{x'} = \vNum{n}}{\estep{\eUnop{\oTest}{x'}}{\reg}{x}{\vFalse}{n = 0}}

\inferrule[SE-TestPtr]{\get{\reg}{x'} = \vPtr{\type}{p}}{\estep{\eUnop{\oTest}{x'}}{\reg}{x}{\vFalse}{\vFalse}}

\inferrule[SE-TestNull]{\get{\reg}{x'} = \vNull{\type}}{\estep{\eUnop{\oTest}{x'}}{\reg}{x}{\vFalse}{\vTrue}}

\inferrule[SE-Lop]{\get{\reg}{x_1} = b_1 \\ \get{\reg}{x_2} = b_2 \\ bop \in \{\oAnd, \oOr\} \\ b_1, b_2 \in \{\vTrue, \vFalse\}}{\estep{\eBinop{x_1}{bop}{x_2}}{\reg}{x}{\vFalse}{b_1\ bop\ b_2}}

\inferrule[SE-Aop]{\get{\reg}{x_1} = \vNum{n_1} \\ \get{\reg}{x_2} = \vNum{n_2} \\ bop \in \{\oAdd, \oSub, \oMul\}}{\estep{\eBinop{x_1}{bop}{x_2}}{\reg}{x}{\vNum{0}}{\vNum{n_1\ bop\ n_2}}}
\end{mathpar}
\caption{Forward step rules of expressions in \CoreLang{}.} \label{fig:core-exp-eval-full}
\end{figure}

\begin{figure}[!htb]
\begin{mathpar}
\inferrule[RE-Var]{\get{\reg}{x'} = \vValue \\ \hastype{\cdot}{\vValue}{\type}}{\restep{x'}{\reg}{x}{\default{\type}}{\vValue}}

\inferrule[RE-Pair]{\get{\reg}{x_1} = \vValue_1 \\ \get{\reg}{x_2} = \vValue_2 \\ \hastype{\cdot}{\vValue_1}{\type_1} \\ \hastype{\cdot}{\vValue_2}{\type_2}}{\restep{\ePair{x_1}{x_2}}{\reg}{x}{\ePair{\default{\type_1}}{\default{\type_2}}}{\ePair{\vValue_1}{\vValue_2}}}

\inferrule[RE-Val]{v \text{ is not a variable or pair} \\ \hastype{\cdot}{\vValue}{\type}}{\restep{\vValue}{\reg}{x}{\default{\type}}{\vValue}}

\inferrule[RE-Proj]{\get{\reg}{x'} = \ePair{\vValue_1}{\vValue_2} \\ \hastype{\cdot}{\vValue_i}{\type}}{\restep{\eProj{i}{x'}}{\reg}{x}{\default{\type}}{\vValue_i}}

\inferrule[RE-Not]{\get{\reg}{x'} = b \\ b \in \{\vTrue, \vFalse\}}{\restep{\eUnop{\oNot}{x'}}{\reg}{x}{\vFalse}{\neg b}}

\inferrule[RE-TestNum]{\get{\reg}{x'} = \vNum{n}}{\restep{\eUnop{\oTest}{x'}}{\reg}{x}{\vFalse}{n = 0}}

\inferrule[RE-TestPtr]{\get{\reg}{x'} = \vPtr{\type}{p}}{\restep{\eUnop{\oTest}{x'}}{\reg}{x}{\vFalse}{\vFalse}}

\inferrule[RE-TestNull]{\get{\reg}{x'} = \vNull{\type}}{\restep{\eUnop{\oTest}{x'}}{\reg}{x}{\vFalse}{\vTrue}}

\inferrule[RE-Lop]{\get{\reg}{x_1} = b_1 \\ \get{\reg}{x_2} = b_2 \\ bop \in \{\oAnd, \oOr\} \\ b_1, b_2 \in \{\vTrue, \vFalse\}}{\restep{\eBinop{x_1}{bop}{x_2}}{\reg}{x}{\vFalse}{b_1\ bop\ b_2}}

\inferrule[RE-Aop]{\get{\reg}{x_1} = \vNum{n_1} \\ \get{\reg}{x_2} = \vNum{n_2} \\ bop \in \{\oAdd, \oSub, \oMul\}}{\restep{\eBinop{x_1}{bop}{x_2}}{\reg}{x}{\vNum{0}}{\vNum{n_1\ bop\ n_2}}}
\end{mathpar}
\caption{Reverse step rules of expressions in \CoreLang{}.} \label{fig:core-exp-reverse}
\end{figure}
\begin{figure}[!htb]
\begin{mathpar}
\inferrule[SS-SeqSkip]{\vphantom{\ctx}}{\sstep{\sSeq{\sSkip}{\sStmt}}{\reg}{\mem}{\sStmt}{\reg}{\mem}}

\inferrule[SS-SeqStep]{\sstep{\sStmt_1}{\reg}{\mem}{\sStmt'_1}{\reg'}{\mem'}}{\sstep{\sSeq{\sStmt_1}{\sStmt_2}}{\reg}{\mem}{\sSeq{\sStmt'_1}{\sStmt_2}}{\reg'}{\mem'}}

\inferrule[SS-Assign]{\hastype{\regtypes{\reg}}{\eExp}{\type} \\ \estep{\eExp}{\reg}{x}{\default{\type}}{\vValue}}{\sstep{\sBind{x}{\eExp}}{\reg}{\mem}{\sSkip}{\subst{\reg}{x}{\vValue}}{\mem}}

\inferrule[SS-UnAssign]{\hastype{\regtypes{\reg}}{\eExp}{\type} \\ \restep{\eExp}{\reg}{x}{\default{\type}}{\vValue}}{\sstep{\sUnbind{x}{\eExp}}{\subst{\reg}{x}{\vValue}}{\mem}{\sSkip}{\reg}{\mem}}

\inferrule[SS-Swap]{\vphantom{\ctx}}{\sstep{\sSwap{x_1}{x_2}}{\subst{\reg}{x_1, x_2}{\vValue_1, \vValue_2}}{\mem}{\sSkip}{\subst{\reg}{x_1, x_2}{\vValue_2, \vValue_1}}{\mem}}

\inferrule[SS-MemSwapNull]{\get{\reg}{x_1} = \vNull{\type}}{\sstep{\sMemSwap{x_1}{x_2}}{\reg}{\mem}{\sSkip}{\reg}{\mem}}

\inferrule[SS-MemSwapPtr]{\get{\reg}{x_1} = \vPtr{\type}{p}}{\sstep{\sMemSwap{x_1}{x_2}}{\subst{\reg}{x_2}{\vValue}}{\subst{\mem}{p}{\vValue'}}{\sSkip}{\subst{\reg}{x_2}{\vValue'}}{\subst{\mem}{p}{\vValue}}}

\inferrule[SS-IfTrue]{\get{\reg}{x} = \vTrue}{\sstep{\sIf{x}{\sStmt}}{\reg}{\mem}{\sStmt}{\reg}{\mem}}

\inferrule[SS-IfFalse]{\get{\reg}{x} = \vFalse}{\sstep{\sIf{x}{\sStmt}}{\reg}{\mem}{\sSkip}{\reg}{\mem}}
\end{mathpar}
\caption{Forward step rules of statements in \CoreLang{}.} \label{fig:core-stmt-step-full}
\end{figure}
\begin{figure}[!htb]
\begin{mathpar}
\inferrule[RS-SeqSkip]{\vphantom{\ctx}}{\rsstep{\sSeq{\sStmt}{\sSkip}}{\reg}{\mem}{\sStmt}{\reg}{\mem}}

\inferrule[RS-SeqStep]{\rsstep{\sStmt_2}{\reg}{\mem}{\sStmt'_2}{\reg'}{\mem'}}{\rsstep{\sSeq{\sStmt_1}{\sStmt_2}}{\reg}{\mem}{\sSeq{\sStmt_1}{\sStmt'_2}}{\reg'}{\mem'}}

\inferrule[RS-Assign]{\hastype{\regtypes{\reg}}{\eExp}{\type} \\ \restep{\eExp}{\reg}{x}{\default{\type}}{\vValue}}{\rsstep{\sBind{x}{\eExp}}{\reg}{\mem}{\sSkip}{\subst{\reg}{x}{\vValue}}{\mem}}

\inferrule[RS-UnAssign]{\hastype{\regtypes{\reg}}{\eExp}{\type} \\ \estep{\eExp}{\reg}{x}{\default{\type}}{\vValue}}{\rsstep{\sUnbind{x}{\eExp}}{\subst{\reg}{x}{\vValue}}{\mem}{\sSkip}{\reg}{\mem}}

\inferrule[RS-Swap]{\vphantom{\ctx}}{\rsstep{\sSwap{x_1}{x_2}}{\subst{\reg}{x_1, x_2}{\vValue_1, \vValue_2}}{\mem}{\sSkip}{\subst{\reg}{x_1, x_2}{\vValue_2, \vValue_1}}{\mem}}

\inferrule[RS-MemSwapNull]{\get{\reg}{x_1} = \vNull{\type}}{\rsstep{\sMemSwap{x_1}{x_2}}{\reg}{\mem}{\sSkip}{\reg}{\mem}}

\inferrule[RS-MemSwapPtr]{\get{\reg}{x_1} = \vPtr{\type}{p}}{\rsstep{\sMemSwap{x_1}{x_2}}{\subst{\reg}{x_2}{\vValue}}{\subst{\mem}{p}{\vValue'}}{\sSkip}{\subst{\reg}{x_2}{\vValue'}}{\subst{\mem}{p}{\vValue}}}

\inferrule[RS-IfTrue]{\get{\reg}{x} = \vTrue}{\rsstep{\sIf{x}{\sStmt}}{\reg}{\mem}{\sStmt}{\reg}{\mem}}

\inferrule[RS-IfFalse]{\get{\reg}{x} = \vFalse}{\rsstep{\sIf{x}{\sStmt}}{\reg}{\mem}{\sSkip}{\reg}{\mem}}
\end{mathpar}
\caption{Reverse step rules of statements in \CoreLang{}.} \label{fig:core-stmt-reverse-full}
\end{figure}

\begin{figure}[!htb]
\begin{mathpar}
\inferrule[Error-SeqStep]{\sstepstuck{\sStmt_1}{\reg}{\mem}}{\sstepstuck{\sSeq{\sStmt_1}{\sStmt_2}}{\reg}{\mem}}

\inferrule[Error-UnAssign]{\hastype{\regtypes{\reg}}{\eExp}{\type} \\ \lnot \restep{\eExp}{\reg}{x}{\default{\type}}{\vValue}}{\sstepstuck{\sUnbind{x}{\eExp}}{\subst{\reg}{x}{\vValue}}{\mem}} \\

\inferrule[RError-SeqStep]{\rsstepstuck{\sStmt_2}{\reg}{\mem}}{\rsstepstuck{\sSeq{\sStmt_1}{\sStmt_2}}{\reg}{\mem}}

\inferrule[RError-Assign]{\hastype{\regtypes{\reg}}{\eExp}{\type} \\ \lnot \estep{\eExp}{\reg}{x}{\default{\type}}{\vValue}}{\rsstepstuck{\sBind{x}{\eExp}}{\subst{\reg}{x}{\vValue}}{\mem}}
\end{mathpar}
\caption{Errors in \CoreLang{}.}
\end{figure}

\begin{figure}[!htb]
\begin{mathpar}
\inferrule[TypOk-Var]{\vphantom{\ctx}}{\typeok{\tctx, t}{t}}

\inferrule[TypOk-Base]{\type \in \{\tUnit, \tUInt, \tBool\}}{\typeok{\tctx}{\type}}

\inferrule[TypOk-Pair]{\typeok{\tctx}{\type_1} \quad \typeok{\tctx}{\type_2}}{\typeok{\tctx}{\tPair{\type_1}{\type_2}}}

\inferrule[TypOk-Ptr]{\typeok{\tctx}{\type}}{\typeok{\tctx}{\tPtr{\type}}}

\inferrule[TypOk-Ind]{\typeok{\tctx, t}{\type} \\\\ t \notin \exposed{\type}}{\typeok{\tctx}{\tInd{t}{\type}}}
\end{mathpar}
\begin{align*}
& \exposed{\tUnit} = \exposed{\tUInt} = \exposed{\tBool} = \exposed{\tPtr{\type}} = \emptyset \\
& \exposed{t} = \{t\} \quad \exposed{\tInd{t}{\type}} = \exposed{\type} \setminus \{t\} \quad \exposed{\tPair{\type_1}{\type_2}} = \exposed{\type_1} \cup \exposed{\type_2}
\end{align*}
\caption{Well-formation rules for types in \LangName{}.} \label{fig:main-type-ok}
\end{figure}

\begin{figure}[!htb]
\begin{mathpar}
\inferrule[TypEq-Refl]{\type \in \{t, \tUnit, \tUInt, \tBool\}}{\typeequiv{\type}{\type}}

\inferrule[TypEq-Pair]{\typeequiv{\type_1}{\type'_1} \\ \typeequiv{\type_2}{\type'_2}}{\typeequiv{\tPair{\type_1}{\type_2}}{\tPair{\type'_1}{\type'_2}}}

\inferrule[TypEq-Ptr]{\typeequiv{\type}{\type'}}{\typeequiv{\tPtr{\type}}{\tPtr{\type'}}}

\inferrule[TypEq-Ind]{\typeequiv{\type}{\type'}}{\typeequiv{\tInd{t}{\type}}{\tInd{t}{\type'}}}

\inferrule[TypEq-Unfold]{\vphantom{\ctx}}{\typeequiv{\tInd{t}{\type}}{\subst{\type}{t}{\tInd{t}{\type}}}}

\inferrule[TypEq-Sym]{\typeequiv{\type_1}{\type_2}}{\typeequiv{\type_2}{\type_1}}

\inferrule[TypEq-Trans]{\typeequiv{\type_1}{\type_2} \\ \typeequiv{\type_2}{\type_3}}{\typeequiv{\type_1}{\type_3}}
\end{mathpar}
\caption{Type equivalence rules in \LangName{}.} \label{fig:main-type-equiv}
\end{figure}

\begin{figure}[!htb]
\begin{mathpar}
\inferrule[TV-Eq]{\typeequiv{\type_1}{\type_2} \\ \hastype{\ctx}{\vValue}{\type_1}}{\hastype{\ctx}{\vValue}{\type_2}}

\inferrule[TV-Var]{\vphantom{\ctx}}{\hastype{\ctx, x : \type}{x}{\type}}

\inferrule[TV-Unit]{\vphantom{\ctx}}{\hastype{\ctx}{\vUnit}{\tUnit}}

\inferrule[TV-Pair]{\hastype{\ctx}{x_1}{\type_1} \\ \hastype{\ctx}{x_2}{\type_2}}{\hastype{\ctx}{\ePair{x_1}{x_2}}{\tPair{\type_1}{\type_2}}}

\inferrule[TV-Num]{\vphantom{\ctx}}{\hastype{\ctx}{\vNum{n}}{\tUInt}}

\inferrule[TV-Bool]{b \in \{\vTrue, \vFalse\}}{\hastype{\ctx}{b}{\tBool}}

\inferrule[TV-Null]{\vphantom{\ctx}}{\hastype{\ctx}{\vNull{\type}}{\tPtr{\type}}}

\inferrule[TV-Ptr]{\vphantom{\ctx}}{\hastype{\ctx}{\vPtr{\type}{p}}{\tPtr{\type}}}
\end{mathpar}
\caption{Typing rules for values in \LangName{}.} \label{fig:main-val-types-full}
\end{figure}

\begin{figure}
\begin{mathpar}
\inferrule[TE-Val]{\hastype{\ctx}{\vValue}{\type}}{\hastypef{\fctx}{\ctx}{f}{\vValue}{\type}}

\inferrule[TE-Proj]{\hastype{\ctx}{x}{\tPair{\type_1}{\type_2}}}{\hastypef{\fctx}{\ctx}{f}{\eProj{i}{x}}{\type_i}}

\inferrule[TE-Not]{\hastype{\ctx}{x}{\tBool}}{\hastypef{\fctx}{\ctx}{f}{\eUnop{\oNot}{x}}{\tBool}}

\inferrule[TE-Test]{\hastype{\ctx}{x}{\type} \\ \type \in \{\tUInt, \tPtr{\type'}\}}{\hastypef{\fctx}{\ctx}{f}{\eUnop{\oTest}{x}}{\tBool}}

\inferrule[TE-Lop]{\hastype{\ctx}{x_1}{\tBool} \\ \hastype{\ctx}{x_2}{\tBool} \\ bop \in \{\oAnd, \oOr\}}{\hastypef{\fctx}{\ctx}{f}{\eBinop{x_1}{bop}{x_2}}{\tBool}}

\inferrule[TE-Aop]{\hastype{\ctx}{x_1}{\tUInt} \\ \hastype{\ctx}{x_2}{\tUInt} \\ bop \in \{\oAdd, \oSub, \oMul\}}{\hastypef{\fctx}{\ctx}{f}{\eBinop{x_1}{bop}{x_2}}{\tUInt}}

\inferrule[TE-Alloc]{\vphantom{\ctx}}{\hastypef{\fctx}{\ctx}{f}{\eAlloc{\type}}{\tPtr{\type}}}

\inferrule[TE-CallNonRecursive]{f_1 \neq f_2 \\ \get{\fctx}{f_2} = (\textsf{None}, ((\type_1, \ldots, \type_k), \type)) \\\\ \hastype{\ctx}{x_1}{\type_1} \quad \cdots \quad \hastype{\ctx}{x_k}{\type_k}}{\hastypef{\fctx}{\ctx}{f_1}{\eCall{f_2}{}{x_1, \ldots, x_k}}{\type}}

\inferrule[TE-CallBounded]{\bBound = n \text{ or } \bBound = \bVar \text{ or } \bVar - n \text{ where } \textsf{fst}(\get{\fctx}{f_1}) = \bVar \\\\ f_1 \neq f_2 \\ \get{\fctx}{f_2} = (\textsf{Some}\ b_2, ((\type_1, \ldots, \type_k), \type)) \\\\ \hastype{\ctx}{x_1}{\type_1} \quad \cdots \quad \hastype{\ctx}{x_k}{\type_k}}{\hastypef{\fctx}{\ctx}{f_1}{\eCall{f_2}{\bBound}{x_1, \ldots, x_k}}{\type}}

\inferrule[TE-CallSelf]{\get{\fctx}{f} = (\textsf{Some}\ b, ((\type_1, \ldots, \type_k), \type)) \\\\ \hastype{\ctx}{x_1}{\type_1} \quad \cdots \quad \hastype{\ctx}{x_k}{\type_k}}{\hastypef{\fctx}{\ctx}{f}{\eCall{f}{\bVar - n}{x_1, \ldots, x_k}}{\type}}
\end{mathpar}
\caption{Typing rules for expressions in \LangName{}.} \label{fig:main-exp-types-full}
\end{figure}

\begin{figure}[!htb]
\begin{mathpar}
\inferrule[S-Skip]{\vphantom{\ctx}}{\stmtokf{\fctx}{\ctx}{f}{\sSkip}{\ctx}}

\inferrule[S-Seq]{\stmtokf{\fctx}{\ctx}{f}{\sStmt_1}{\ctx'} \\ \stmtokf{\fctx}{\ctx'}{f}{\sStmt_2}{\ctx''}}{\stmtokf{\fctx}{\ctx}{f}{\sSeq{\sStmt_1}{\sStmt_2}}{\ctx''}}

\inferrule[S-Assign]{\hastypef{\fctx}{\ctx}{f}{\eExp}{\type} \\ x \notin \ctx}{\stmtokf{\fctx}{\ctx}{f}{\sBind{x}{e}}{\ctx, x : \type}}

\inferrule[S-UnAssign]{\hastypef{\fctx}{\ctx}{f}{\eExp}{\type} \\ x \notin \ctx}{\stmtokf{\fctx}{\ctx, x : \type}{f}{\sUnbind{x}{e}}{\ctx}}

\inferrule[S-Swap]{\hastype{\ctx}{x_1}{\type} \\ \hastype{\ctx}{x_2}{\type}}{\stmtokf{\fctx}{\ctx}{f}{\sSwap{x_1}{x_2}}{\ctx}}

\inferrule[S-MemSwap]{\hastype{\ctx}{x_1}{\tPtr{\type}} \\ \hastype{\ctx}{x_2}{\type}}{\stmtokf{\fctx}{\ctx}{f}{\sMemSwap{x_1}{x_2}}{\ctx}}

\inferrule[S-If]{\stmtokf{\fctx}{\ctx}{f}{\sStmt}{\ctx} \\ \hastype{\ctx}{x}{\tBool} \\ x \notin \modified{\sStmt}}{\stmtokf{\fctx}{\ctx}{f}{\sIf{x}{s}}{\ctx}}

\inferrule[S-Return]{\vphantom{\ctx}}{\stmtokf{\fctx}{x : \type}{f}{\sReturn{x}}{\cdot}}
\end{mathpar}
\caption{Well-formation rules for statements in \LangName{}.} \label{fig:main-stmt-ok}
\end{figure}

\FloatBarrier

\section{Additional Features} \label{sec:additional-features}

Apart from minor surface syntax enhancements, the following features of \LangName{} are omitted from the formal grammar, and we use them to improve readability of the example programs in this paper:
\begin{itemize}
\item \emph{\lstinline{if-else} statements}. We lower this construct to single-branch \lstinline{if} statements in the core.
\item \emph{\lstinline{with-do} blocks}. The statement $\sWith{\sStmt_1}{\sStmt_2}$ lowers to $\sSeq{\sSeq{\sStmt_1}{\sStmt_2}}{\reverse{\sStmt_1}}$. That is, the statement in the \lstinline{with}-block is reversed after executing the statement in the \lstinline{do}-block. We provide this construct to easily uncompute temporary values. For example, to store $(\texttt{x1} + \texttt{x2}) + \texttt{x3}$ into \texttt{result}, instead of writing
\begin{lstlisting}[style=nonum]
let temp <- x1 + x2;
let result <- temp + x3;
let temp -> x1 + x2; /* Uncompute temp */
\end{lstlisting}
one may more concisely write
\begin{lstlisting}[style=nonum]
with { let temp <- x1 + x2; }
do { let result <- temp + x3; }
\end{lstlisting}
\item \emph{Nested expressions}. In fact, \LangName{} supports writing $(\texttt{x1} + \texttt{x2}) + \texttt{x3}$ even more concisely through nesting expressions. One may simply write:
\begin{lstlisting}[style=nonum]
let result <- x1 + x2 + x3;
\end{lstlisting}
The lowering of this construct uncomputes all temporary values generated by evaluating nested expressions after their use. Nesting is supported for all expressions other than function arguments and return values as functions may cause side effects.
\item \emph{Patterns}. \LangName{} supports assigning to and un-assigning from patterns, such as tuples and literals, in addition to variables. For example, to project from a tuple, instead of writing
\begin{lstlisting}[style=nonum]
let x <- t.1;
let y <- x.1;
let y -> 0;
let z <- x.2;
let x -> (y, z); /* uncompute x */
let w <- t.2;
\end{lstlisting}
one may more concisely write
\begin{lstlisting}[style=nonum]
let ((0, z), w) <- t;
\end{lstlisting}
\item \emph{Built-in operations}. \LangName{} supports additional unary and binary operators, such as comparisons, bitwise operators, and pointer arithmetic.
\item \emph{Default values}. The operator \lstinline{default<t>} constructs $\default{\type}$, the default value for type \lstinline{t}.
\end{itemize}

\section{Detailed Example of \LangName{} Program Translation} \label{sec:translation-example}

In this section, we provide a detailed example of the translation of a \LangName{} program into \CoreLang{} by the procedure described in~\Cref{sec:translation}.

\begin{figure}
\begin{subfigure}[t]{0.32\textwidth}
\begin{lstlisting}[style=color, xleftmargin=0.5em]
fun f[n]() -> uint {
  with { let y <- f[n-1](); }
  do { let out <- y + 1; }
}
fun main() -> uint {
  let p <- alloc<uint>;
  let p -> alloc<uint>;
  let out <- f[2]();
}
\end{lstlisting}
\captionof{figure}{Initial \LangName{} program.} \label{fig:trans-1}
\end{subfigure}
\begin{subfigure}[t]{0.32\textwidth}
\begin{lstlisting}[style=color, xleftmargin=1.5em]
fun f[n]() -> uint {
  with { let y <- f[n-1](); }
  do { let out <- y + 1; }
}
fun main() -> uint {
  /* let p -> alloc<uint>; */
  let p <- null;
  p <-> alloc_reg;
  *p <-> alloc_reg;
  /* let p -> alloc<uint>; */
  *p <-> alloc_reg;
  p <-> alloc_reg;
  let p <- null;
  let out <- f[2]();
}
\end{lstlisting}
\captionof{figure}{After translating \texttt{alloc}.} \label{fig:trans-2}
\end{subfigure}
\begin{subfigure}[t]{0.32\textwidth}
\begin{lstlisting}[style=color, xleftmargin=2.5em]
fun f[n]() -> uint {
  /* with {...} do {...} */
  let y <- f[n-1]();
  let out <- y + 1;
  let y -> f[n-1]();
}
fun main() -> uint {
  let p <- null;
  p <-> alloc_reg;
  *p <-> alloc_reg;
  *p <-> alloc_reg;
  p <-> alloc_reg;
  let p <- null;
  let out <- f[2]();
}
\end{lstlisting}
\captionof{figure}{After translating \texttt{with}-\texttt{do} block.} \label{fig:trans-3}
\end{subfigure}
\begin{subfigure}[t]{0.32\textwidth}
\begin{lstlisting}[style=color, xleftmargin=0.5em]
fun f[n]() -> uint {
  let y <- f[n-1]();
  let out <- y + 1;
  let y -> f[n-1]();
}
fun main() -> uint {
  let p <- null;
  p <-> alloc_reg;
  *p <-> alloc_reg;
  *p <-> alloc_reg;
  p <-> alloc_reg;
  let p <- null;
  /* let out <- f[2](); */
  let y <- f[1]();
  let out <- y + 1;
  let y -> f[1]();
}
\end{lstlisting}
\captionof{figure}{After inlining \texttt{f[2]()}.} \label{fig:trans-4}
\end{subfigure}
\begin{subfigure}[t]{0.32\textwidth}
\begin{lstlisting}[style=color, xleftmargin=1.5em]
fun f[n]() -> uint {
  let y <- f[n-1]();
  let out <- y + 1;
  let y -> f[n-1]();
}
fun main() -> uint {
  let p <- null;
  p <-> alloc_reg;
  *p <-> alloc_reg;
  *p <-> alloc_reg;
  p <-> alloc_reg;
  let p <- null;
  /* let y <- f[1](); */
  let y2 <- f[0]();
  let y <- y2 + 1;
  let y2 -> f[0]();
  let out <- y + 1;
  /* let y -> f[1](); */
  let y2 <- f[0]();
  let y -> y2 + 1;
  let y2 -> f[0]();
}
\end{lstlisting}
\captionof{figure}{After inlining \texttt{f[1]()}.} \label{fig:trans-5}
\end{subfigure}
\begin{subfigure}[t]{0.32\textwidth}
\begin{lstlisting}[style=color, xleftmargin=2.5em]
fun f[n]() -> uint {
  let y <- f[n-1]();
  let out <- y + 1;
  let y -> f[n-1]();
}
fun main() -> uint {
  let p <- null;
  p <-> alloc_reg;
  *p <-> alloc_reg;
  *p <-> alloc_reg;
  p <-> alloc_reg;
  let p <- null;
  /* let y2 <- f[0](); */
  let y2 <- 0;
  let y <- y2 + 1;
  /* let y2 -> f[0](); */
  let y2 -> 0;
  let out <- y + 1;
  /* let y2 <- f[0](); */
  let y2 <- 0;
  let y -> y2 + 1;
  /* let y2 -> f[0](); */
  let y2 -> 0;
}
\end{lstlisting}
\captionof{figure}{After inlining \texttt{f[0]()}.} \label{fig:trans-6}
\end{subfigure}
\caption{Translation of \LangName{} program into \CoreLang{}.} \label{fig:trans-fig}
\end{figure}

In~\Cref{fig:trans-1}, we present the example program. In the declaration of the recursive function \texttt{f}, the recursion bound is \texttt{n}. The function \texttt{f} invokes itself recursively with the recursion bound \texttt{n-1}, which is strictly smaller than \texttt{n}, and is thus well-typed. The non-recursive function \texttt{main} is the entry point of the program and invokes the function \texttt{f} with the recursion bound 2, which is a constant, and is thus also well-typed. The intended output of this \texttt{main} function is the value 2.

This \LangName{} program is lowered to a \CoreLang{} program by the procedure described in~\Cref{sec:translation}. The procedure starts at \texttt{main}. In~\Cref{fig:trans-2}, it first translates the \texttt{alloc} construct, replacing it with the operations defined in~\Cref{sec:allocator}.
Next, in~\Cref{fig:trans-3} it translates the \texttt{with}-\texttt{do} block as described by~\Cref{sec:additional-features}.
In~\Cref{fig:trans-4}, it inlines the call to function \texttt{f} into \texttt{main}, substituting the bound 2 for \texttt{n}.
In~\Cref{fig:trans-5}, it repeats the inlining of \texttt{f} again, taking care to rename the local variable \texttt{y} to avoid collision.
Finally, in~\Cref{fig:trans-6} it replaces the call to \texttt{f} with recursion bound 0 with the base case of 0, as described in~\Cref{sec:translation}.
The result is a fully unfolded, well-typed \CoreLang{} program. Executing this program yields the output 2 from \texttt{main}.

\section{Data Structure Descriptions} \label{sec:data-structure-descriptions}

In this section, we describe in detail the data structures we implemented and notable differences from their classical equivalents in usage or development process.

\paragraph{Linked List}

We implemented four operations that recursively traverse linked lists:
\begin{itemize}
    \item $\texttt{length}(\texttt{l})$ returns the length of \texttt{l} in $O(n)$ time.
    \item $\texttt{sum}(\texttt{l})$ returns the sum of elements in \texttt{l} in $O(n)$ time.
    \item $\texttt{find\_pos}(\texttt{l}, \texttt{x})$ returns the position of \texttt{x} in \texttt{l} in $O(n)$ time.
    \item $\texttt{remove}(\texttt{l}, \texttt{x})$ removes \texttt{x} from \texttt{l} and returns its position in $O(n)$ time.
\end{itemize}

Implementing these operations reversibly in linear time required overcoming the challenges of recursive uncomputation (\Cref{sec:recursive-uncomputation}) and mutated uncomputation (\Cref{sec:mutated-uncomputation}).

\paragraph{Stack and Queue}

These operations manipulate a linked list as a LIFO stack and a FIFO queue:
\begin{itemize}
    \item $\texttt{push\_front}(\texttt{l}, \texttt{x})$ pushes \texttt{x} to the front of \texttt{l} in $O(1)$ time.
    \item $\texttt{pop\_front}(\texttt{x})$ pops an element from the front of \texttt{l} and returns it in $O(1)$ time.
    \item $\texttt{push\_back}(\texttt{l}, \texttt{x})$ pushes \texttt{x} to the back of \texttt{l} in $O(n)$ time.
\end{itemize}

The time complexity of \texttt{push\_back} is linear because we use a singly-linked list and could be further improved using a doubly-linked list.

\paragraph{String}

We implemented strings using two underlying representations: variable-length character arrays, which are not history-independent,\footnotemark{} and fixed-length words, which are. Only the word implementation is summarized in~\Cref{tbl:results}. The following operations are common to both implementations:
\footnotetext{We do not consider these arrays to be history-independent as they consist of a contiguous variable-length block rather than linked nodes that can be allocated from a pool of fixed-size blocks.}
\begin{itemize}
    \item $\texttt{is\_empty}(\texttt{s})$ returns whether \texttt{s} is an empty string in $O(1)$ time.
    \item $\texttt{length}(\texttt{s})$ returns the length of \texttt{s} in $O(1)$ time.
    \item $\texttt{get\_prefix}(\texttt{s}, \texttt{i})$ returns the prefix substring of length \texttt{i} of \texttt{s}.
    \item $\texttt{get\_substring}(\texttt{s}, \texttt{i})$ returns the suffix substring beginning at index \texttt{i} of \texttt{s}.
    \item $\texttt{get}(\texttt{s}, \texttt{i})$ returns character \texttt{i} of \texttt{s}.
    \item $\texttt{is\_prefix}(\texttt{s1}, \texttt{s2})$ returns whether \texttt{s1} is a prefix of \texttt{s2}.
    \item $\texttt{num\_matching}(\texttt{s1}, \texttt{s2})$ returns the number of characters that match in \texttt{s1} and \texttt{s2} starting from the beginning of both strings.
    \item $\texttt{equal}(\texttt{s1}, \texttt{s2})$ returns whether \texttt{s1} and \texttt{s2} are equal strings.
\end{itemize}

The time complexities of \texttt{get\_prefix}, \texttt{get\_substring}, and \texttt{get} are $O(1)$ for array-based strings, which use pointer arithmetic, and $O(k)$ for word-based strings, which use bitwise operations whose complexity increases with the word size. The complexities of \texttt{is\_prefix} and \texttt{num\_matching} are $O(n)$ for arrays and $O(\textrm{poly}(k))$ for words, which both use recursion. The complexity of \texttt{equal} is $O(n)$ for arrays, which use recursion, and $O(k)$ for words, which uses bitwise comparison.

The following operations are implemented for word-based strings only:
\begin{itemize}
    \item $\texttt{concat}(\texttt{s1}, \texttt{s2})$ returns the concatenation of \texttt{s1} and \texttt{s2} in $O(\textrm{poly}(k))$ time.
    \item $\texttt{compare}(\texttt{s1}, \texttt{s2})$ returns whether $\texttt{s1} \le \texttt{s2}$ lexicographically in $O(\textrm{poly}(k))$ time.
\end{itemize}

\paragraph{Set}

We implement the abstract data structure of sets using radix trees~\citep{morrison}. Unlike hash tables, radix trees are history-independent, and for any set of keys there exists a unique radix tree encoding the set. The set supports the interface:
\begin{itemize}
    \item $\texttt{insert}(\texttt{s}, \texttt{k}, \texttt{v})$ inserts key $\texttt{k}$ into set $\texttt{s}$ in $O(\textrm{poly}(k))$ time.
    \item $\texttt{contains}(\texttt{s}, \texttt{k})$ returns whether set $\texttt{s}$ contains key $\texttt{k}$ in $O(\textrm{poly}(k))$ time.
\end{itemize}

The radix tree structure was by far the most complex data structure to implement in the library, at over 400 lines of code in \LangName{}. Implementing an efficient tree traversal required overcoming all three challenges, particularly branch sequentialization (\Cref{sec:branch-sequentialization}).

For comparison, we also implemented sets using separate-chaining hash tables, which are not history-independent. This interface is actually more general, supporting map operations that additionally associate a key with a value:
\begin{itemize}
    \item $\texttt{insert}(\texttt{m}, \texttt{k}, \texttt{v})$ inserts $\texttt{v}$ at $\texttt{k}$ into map $\texttt{m}$ and returns the old value if present, in $O(n)$ time.
    \item $\texttt{contains}(\texttt{m}, \texttt{k})$ returns whether map $\texttt{m}$ contains key $\texttt{k}$ in $O(n)$ time.
\end{itemize}

Classically, the complexity of hash table operations is $O(n)$ in the worst case of maximal hash collisions and has $O(1)$ average complexity. The quantum implementation always incurs the $O(n)$ worst case complexity in the general case with no guarantees on hash function performance.

One may invoke \texttt{insert} twice to copy a value out of the map for a given key.
Because \texttt{insert} is reversible, it is also possible to remove the elements of a hash table in the order they were inserted by reversing the \texttt{insert} operation:
\begin{lstlisting}[style=nonum]
let null <- insert[N](m, 1, 2);
let null <- insert[N](m, 2, 4);
let null -> insert[N](m, 2, 4); /* Must remove in this order */
let null -> insert[N](m, 1, 2);
\end{lstlisting}
However, removing elements in a different order may corrupt the state of the hash table, because a separate-chaining hash table is not history-independent in general.

\end{document}